\newif\ifarxiv
\arxivtrue          

\ifarxiv
  \documentclass[aps,prd,reprint,amsmath,amssymb,superscriptaddress,nofootinbib,floatfix]{revtex4-2}
\else
  \documentclass[pdflatex,sn-mathphys-num]{sn-jnl}
\fi

\usepackage{amsmath,amssymb,amsfonts,mathtools}
\usepackage{amsthm}
\usepackage{bm}
\usepackage{booktabs}
\usepackage{array}
\usepackage{microtype}
\usepackage{dsfont}
\usepackage{tikz}
\usepackage{enumitem}
\usepackage{xcolor}
\usepackage[colorlinks=true,allcolors=blue]{hyperref}

\theoremstyle{plain}
\newtheorem{theorem}{Theorem}[section]
\newtheorem{proposition}[theorem]{Proposition}
\newtheorem{corollary}[theorem]{Corollary}

\newtheorem{assumption}[theorem]{Assumption}

\theoremstyle{remark}
\newtheorem{remark}[theorem]{Remark}

\theoremstyle{definition}

\newcommand{\cH}{\mathcal{H}}

\newcommand{\dd}{\mathrm d}

\begin{document}

\newcommand{\PaperTitle}{The Emergence of Time from Quantum Records: Actualization and the Surprisal Clock}
\newcommand{\PaperKeywords}{foundations of quantum mechanics, emergent time, quantum records, surprisal, relational clocks}
\newcommand{\PaperAbstract}{%
We develop a record-based account of internal time in quantum mechanics. Persistent records first define an ordinal chronology through inclusion of their accumulated Boolean algebras. On a specified record filtration, we prove that, under three minimal consistency requirements, namely dependence only on conditional Born weight, invariance under sequential refinement of the same recorded fact, and continuity, the actualization of each outcome contributes an internal time proportional to its \textit{surprisal}, defined as the negative logarithm of that probability. A certain outcome therefore contributes no duration, whereas less probable outcomes contribute larger increments. The mean and variance of the accumulated clock are governed by the Shannon entropy and varentropy of the record process, its moment-generating function is related to the R'enyi entropy spectrum, and its pathwise fluctuations admit a Doob decomposition into a predictable entropic compensator and a martingale. Two further results provide independent consistency checks. Within the stated class of finite-dimensional bipartite states, universal additivity of local clock readings across all admissible local record contexts is equivalent to the absence of entanglement. In the quantum Zeno regime, the number of monitoring rounds may diverge while the expected accumulated surprisal tends to zero. The construction therefore supplies a canonical internal-time functional for a specified quantum record process without presupposing an external clock. It does not by itself identify which physical record processes constitute material clocks; determining their production rates and dynamical realization remains a separate problem.%
}

\ifarxiv
  \title{\PaperTitle}
  \author{Lionel Martellini}
  \email{lionel.martellini@edhec.edu}
  \affiliation{EDHEC Quantum Institute, EDHEC Business School, Nice, France}
  \begin{abstract}
  \PaperAbstract
  \end{abstract}
  \keywords{\PaperKeywords}
\else
  \title[The Emergence of Time from Quantum Records]{\PaperTitle}
  \author*[1]{\fnm{Lionel} \sur{Martellini}}\email{lionel.martellini@edhec.edu}
  \affil[1]{\orgdiv{EDHEC Quantum Institute},
  \orgname{EDHEC Business School},
  \orgaddress{\city{Nice}, \country{France}}}
  \abstract{\PaperAbstract}
  \keywords{\PaperKeywords}
\fi

\maketitle

\section{Introduction}
\label{sec:introduction}

Time plays several logically distinct roles in quantum theory. In the standard
formalism and in laboratory applications, it usually appears as an
\textit{external} parameter labeling state evolution. This external role should be distinguished from the stronger Newtonian notion of a universal
absolute time: relativistic physics does not permit such a preferred temporal
structure, although frame-dependent coordinate times may of course still be
used as external parameters. A different possibility is that time is not fundamental in this sense, but
that its operational structure emerges from physical processes
\textit{internal} to the quantum description itself.  In generally covariant
and quantum-gravitational settings, this emergent construction is a
necessity since a closed system may satisfy a stationary constraint with no
preferred external time
\cite{DeWitt1967,Kuchar1992,Isham1993}. A central challenge is therefore to
explain how temporal structure can be grounded in physical processes internal
to a quantum system, rather than simply presupposed through an external time
parameter.

The present paper addresses a narrower question: once physical events have produced a persistent sequence of records, can temporal order and quantitative duration be reconstructed from the record process itself? If so, an operational temporal structure would emerge from the accumulation of physical facts rather than being supplied as an external parameter. Relational-clock constructions address one part of this problem. The
Page--Wootters mechanism and related constructions describe effective evolution
relative to an internal subsystem, including globally stationary settings,
while thermal time, partial observables, and quantum-reference-frame
approaches provide different ways of replacing an externally privileged time
parameter by relational structure
\cite{PageWootters1983,Wootters1984,ConnesRovelli1994,Rovelli1991,
Giovannetti2015}.  These approaches establish that an external clock need not
be fundamental, but they do not in general supply a unique rule for assigning
a numerical duration to the successive factual records generated by a given
physical clock process.  A complementary body of work concerns histories and records.  Consistent- and
decoherent-history formulations assign probabilities to suitably decoherent
sequences of alternatives, while environmental decoherence and redundant
records explain how information about such alternatives can become stably
available
\cite{Griffiths1984,Omnes1992,GellMannHartle1993,Zurek2003}. These
frameworks provide much of the formal structure needed to discuss persistent
record histories, but they are not primarily theories of the numerical
duration carried by an individual realized history. 

The present paper connects these two strands of the literature by asking whether an operational notion of time can emerge from the realized record process itself: once a physical process produces a persistent sequence of records, is there a canonical way of assigning a quantitative duration to that history? This question is naturally motivated by the theory of potentiality introduced in Ref.~\cite{MartelliniPotentiality}, where the quantum state represents unresolved physical potentialities and measurement corresponds to their actualization into stable factual records. The temporal hypothesis explored here is that the accumulation of such actualized facts can support not only an ordering of events but also a quantitative clock structure. The question is then what additive clock functional, if any, is compatible with their Born probabilities and with refinement of the record. The formal analysis in this paper, however, does not require this ontological reading. We assume a realized persistent record history and do not address the dynamical mechanism by which one possible outcome becomes factual. The representation theorem and the subsequent analysis use only standard quantum states, Born probabilities, conditional updates, and persistent records. The results can therefore be assessed independently within the standard formalism of quantum mechanics.

Within this scope, our main findings can be stated as follows. Persistence of
records first provides an ordinal chronology, since later record algebras retain
the factual content fixed by earlier ones. Once a record filtration has been
specified, the representation theorem then identifies a quantitative pathwise
clock. If the increment associated with a newly realized record depends only on
its conditional Born probability, is additive under sequential refinement of
the same final recorded fact, and satisfies a mild continuity condition, its
dependence on probability is shown to be logarithmic. Equivalently, each
actualized outcome contributes a duration proportional to its \textit{surprisal},
defined as the negative logarithm of its conditional probability. An outcome
with conditional probability one therefore contributes no increment, whereas
increasingly unlikely outcomes contribute progressively larger ones. For a realized history $h$ with Born probability $P(h)$, the accumulated
clock reading takes the compact form $-\sigma\ln P(h)$, where $\sigma>0$
is a free calibration parameter, to be fixed for a given physical clock
process by matching its mean reading with respect to a chosen reference time scale. Shannon entropy,
varentropy, R\'enyi quantities, and martingale and large-deviation results are then introduced to provide a detailed statistical characterization of the surprisal clock distributions across possible paths. For projective records, the surprisal increment also admits an exact geometric
representation in terms of the Fubini--Study angle between the pre- and
post-record rays. This analysis shows that a nonzero surprisal-clock
reading must come from the formation of physical records, rather
than from an arbitrary subdivision of continuous state evolution. While the construction is universal in form, its realization is relative to a specified record filtration and the clock functional 
itself does not single out any admissible filtration as privileged. Filtrations that are merely different
descriptions of the same factual record process give the same dimensionless
surprisal reading, whereas genuinely different record processes generally define
different clocks. When two filtrations are jointly representable, their readings
can nevertheless be related through the conditional and joint Born probabilities
of their records. In this sense, the construction admits a
filtration-covariant comparison structure, developed in
Sec.~\ref{subsec:filtration-covariance}. A separate issue concerns physical realization. The representation theorem determines the
clock functional once the record process has been specified, but not the dynamics
that generates that process or its physical rate. Whether a given system produces
persistent, stable, and sufficiently regular records to function as a material
clock must therefore be determined from its underlying dynamics. 

Two further results clarify how the construction behaves in situations involving correlations and repeated measurements. First, for pure bipartite states, we show that additivity of the two local clock readings in every pair of local measurement contexts is equivalent to the absence of entanglement. Thus, universal additivity of the two local clock readings holds if and only if the state is a product state. For entangled states, the two local readings remain individually well defined, but combining them requires the joint record process, with the departure from additivity determined by the correlations between the two records. Second, in the quantum Zeno regime, the number of measurement rounds can diverge while the total accumulated surprisal tends to zero. As successive outcomes become almost certain, observation continues to generate records, but those records contribute essentially no additional surprisal to the clock. In this sense, the Zeno limit suppresses observable transitions while the corresponding surprisal clock also freezes, even though a finite accumulated forward projective motion remains. 

In addition to the two strands of the literature discussed above, namely relational
approaches to time on the one hand, and consistent- or decoherent-history
formulations on the other, the present construction is also related to several existing approaches to internal time in quantum systems. First, quantum-reference-frame formulations emphasize that conditioned
states and dynamical generators may depend on the choice of internal clock
\cite{VanrietveldeEtAl2020,HoehnSmithLock2021}. The key difference is that  the present paper addresses
instead the record-based reading associated with a specified clock process
and does not attempt a general theory of transformations between nonideal
quantum clocks. On a different note, entropic dynamics also uses temporal succession to organize
inferential change \cite{Caticha2011}, while other entropy-based clock
proposals employ coarse-grained entropies as internal parameters
\cite{Weberszpil2026,Barontini2026}. The distinction here is that the
primitive quantity is the pathwise surprisal of each realized record;
Shannon entropy enters only after averaging over alternative histories. Finally, our construction should also be distinguished from approaches based on time
observables or thermodynamic entropy production. The surprisal clock is a
classical random variable defined on a record filtration, not a self-adjoint
operator canonically conjugate to the system Hamiltonian, so Pauli-type
obstructions are not directly engaged
\cite{Pauli1933,BuschGrabowskiLahti1994}. Likewise, the Shannon entropy of
record production is not identified with thermodynamic entropy production:
relating record formation to heat, erasure cost, or irreversible
thermodynamics requires an additional physical model of the recorder and its
environment. 

The rest of the paper is organized as follows. Section~\ref{sec:records} introduces
records and realized histories, and Sec.~\ref{sec:ordinal} derives their
ordinal structure. Section~\ref{sec:actualization-time} proves the
representation theorem and develops the main statistical properties of the
resulting surprisal clock. Section~\ref{sec:moments-deviations} develops its
large-deviation structure and relation to the R\'enyi entropy spectrum.
Section~\ref{sec:projective-time} relates record surprisal to projective
Hilbert-space geometry and examines its continuum scaling.
Section~\ref{sec:multiple} studies the comparison and additivity of clocks
associated with different record processes, while Sec.~\ref{sec:zeno-time}
analyzes the effect of repeated and continuous monitoring on the clock rate. Sec.~\ref{sec:further-research} discusses ontological implications and open directions, and
Sec.~\ref{sec:conclusion} summarizes the scope of the result.

\section{Quantum actualization, records, and histories}
\label{sec:records}

For simplicity, we work throughout with discrete record processes, with finite
or countable outcome sets. Probabilities are the standard Born probabilities
associated with the relevant quantum measurement or record process. 

\subsection{Actualization and conditional state update}
\label{subsec:conditional-potentiality}

Following the terminology of the theory of potentiality
\cite{MartelliniPotentiality}, throughout this paper we use
\emph{actualization} as an operational shorthand for the formation of a stable
record associated with a realized outcome. Here a record is called stable if,
once formed, its value remains available as fixed conditioning data for the
subsequent stages of the record process under consideration. This notion of stability is deliberately operational: a record need only remain
sufficiently persistent and distinguishable over the timescale and set of
operations relevant to the process. In realistic systems this stability is
generally approximate, with imperfect distinguishability and residual coherence
\cite{Zurek2003,Schlosshauer2005,OllivierPoulinZurek2004}. Environmental
decoherence provides a standard mechanism for suppressing coherent
recombination and supporting an approximately classical record structure.
Accordingly, not every state-changing interaction constitutes an actualization;
the representation theorem below assumes a specified idealized record
filtration and does not require a microscopic model of record formation.

Suppose that a measurement or interaction produces a record with possible values
$r\in\Omega_R$. The recorded alternatives are represented by a family of mutually
orthogonal projectors $\{\Pi_r\}_{r\in\Omega_R}$ on the system Hilbert space, one for
each value of the record, not necessarily of rank one, with
\begin{equation}
\sum_{r\in\Omega_R}\Pi_r=\mathds{1}.
\end{equation}
For a projective measurement, conditional on the formation of a record with
value $r$, the state used for subsequent predictions is
\begin{equation}
|\psi_r\rangle
=
\frac{\Pi_r|\psi\rangle}{\sqrt{p(r)}},
\qquad
p(r)=\langle\psi|\Pi_r|\psi\rangle,
\label{eq:conditioning}
\end{equation}
whenever $p(r)>0$. Equation~\eqref{eq:conditioning} is a conditional
state-update rule: it specifies the predictive state \emph{given} the newly
recorded fact $r$.

The same rule applies recursively along a realized record sequence.
Let $r_n$ denote the value of the stable record formed at stage $n$, and write
$|\psi_{h_n}\rangle$ for the quantum state conditioned on the records accumulated
up to that stage. Then
\begin{equation}
|\psi_{h_n}\rangle
=
\frac{
\Pi_{r_n}|\psi_{h_{n-1}}\rangle
}{
\sqrt{p(r_n\mid h_{n-1})}
},
\qquad
p(r_n\mid h_{n-1})
=
\langle\psi_{h_{n-1}}|\Pi_{r_n}|\psi_{h_{n-1}}\rangle .
\label{eq:history-conditioning}
\end{equation}
The history notation $h_n$ is defined explicitly in the next subsection. More
general quantum instruments replace the projective numerator in
Eq.~\eqref{eq:history-conditioning}, while retaining the distinction between
the accumulated factual records and the predictive quantum state conditioned
on those records.

Thus actualization has two logically distinct consequences within the standard
formalism used here. It adds new factual content to the persistent record
structure, while changing the conditional quantum state used to assign Born
probabilities to subsequent records. The accumulated records constitute the
growing factual history, whereas the conditional quantum state determines the
Born weights assigned to subsequent records along that history. The update
is a conditioning on realized records and does not, by itself, introduce an
independent physical time parameter. The possible ontological reading of this
conditional sequence in terms of an underlying atemporal domain of
potentialities is deferred to Sec.~\ref{sec:further-research}.

\subsection{Histories and accumulated record algebras}

Consider a sequence of actualized records
\begin{equation}
h_n=(r_1,r_2,\ldots,r_n),
\qquad
r_k\in\Omega_{R_k},
\label{eq:history}
\end{equation}
where $\Omega_{R_k}$ denotes the set of possible values of the $k$th record.
The Born likelihood of this history is
\begin{equation}
P(h_n)
=
\prod_{k=1}^{n} p(r_k\mid h_{k-1}),
\qquad h_0=\varnothing,
\label{eq:history-probability}
\end{equation}
where each conditional probability is computed from the state conditioned on the previous
records. Equation~\eqref{eq:history-probability} is simply the chain rule for the sequential
measurement process.

Let $\Sigma_n$ denote the Boolean algebra generated by the first $n$ records. The physical
assumption required for an internal temporal order is the persistence of these records.

\begin{assumption}[Record persistence]
\label{ass:persistence}
The stability condition introduced above is assumed to
hold jointly along a realized history: once a record has been formed, its
factual content remains part of the accumulated record structure at all later
stages. Accordingly, the record algebras are nested,
\begin{equation}
\Sigma_0\subseteq\Sigma_1\subseteq\cdots\subseteq\Sigma_n\subseteq\cdots.
\label{eq:nested-records}
\end{equation}
Later actualizations may refine or enlarge the recorded information, but they
do not erase the factual content fixed by earlier records.
\end{assumption}

Assumption~\ref{ass:persistence} is not implied by microscopic unitary
dynamics alone. Rather, it describes the effective regime in which interactions
with additional degrees of freedom produce records whose factual content is
sufficiently stable over the history under consideration.

\section{Ordinal time from record inclusion}
\label{sec:ordinal}

The first temporal structure supplied by persistent records is an order.

\begin{proposition}[Record order]
\label{prop:record-order}
Under Assumption~\ref{ass:persistence}, the accumulated record algebras along one realized history are linearly ordered by inclusion,
\begin{equation}
 \Sigma_m\preceq\Sigma_n
 \quad\Longleftrightarrow\quad
 \Sigma_m\subseteq\Sigma_n.
 \label{eq:record-order}
\end{equation}
For a countable history this order admits an order-preserving embedding into the real line.
\end{proposition}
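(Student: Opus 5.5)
The plan is to check the partial-order axioms for inclusion, obtain linearity (totality) from the nesting in Eq.~\eqref{eq:nested-records}, and then construct the real embedding explicitly. Set $\mathcal{C}=\{\Sigma_n : n\ge 0\}$, the family of accumulated record algebras along the realized history, and define $\preceq$ on $\mathcal{C}$ by Eq.~\eqref{eq:record-order}.

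\textbf{Partial order.} Set inclusion is reflexive, transitive and antisymmetric on any family of sets, so $\preceq$ is a partial order on $\mathcal{C}$. Note that antisymmetry is taken on the algebras themselves, not on the indices. If no new information is recorded, one may have $\Sigma_m=\Sigma_{m+1}$ with $m\neq m+1$; such stages collapse to a single element of $\mathcal{C}$. This is consistent with the later surprisal picture, in which a certain outcome adds no duration.

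\textbf{Totality.} This is the only place Assumption~\ref{ass:persistence} is used. Given $\Sigma_m,\Sigma_n\in\mathcal{C}$, suppose without loss of generality that $m\le n$. Iterating the nesting $\Sigma_k\subseteq\Sigma_{k+1}$ from $k=m$ to $n-1$ and applying transitivity gives $\Sigma_m\subseteq\Sigma_n$. Any two elements are therefore comparable, and $(\mathcal{C},\preceq)$ is a chain. The map $n\mapsto\Sigma_n$ is then monotone from $(\mathbb{N},\le)$ onto the chain, so the ordinal type of $\mathcal{C}$ is that of an initial segment of $\mathbb{N}$, or of $\mathbb{N}$ itself.

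\textbf{Embedding.} For each $\Sigma\in\mathcal{C}$ let $n(\Sigma)=\min\{n:\Sigma_n=\Sigma\}$ be its first index, and define
\[
\tau(\Sigma)=n(\Sigma)\in\mathbb{R}.
\]
If $\Sigma\subsetneq\Sigma'$, then $\Sigma'$ cannot occur at any index $\le n(\Sigma)$. Indeed, by nesting, every algebra at an index $\le n(\Sigma)$ is contained in $\Sigma$. Hence $n(\Sigma')>n(\Sigma)$, and $\tau$ is strictly order-preserving and injective.

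For completeness, one can also cite the general fact that every countable linear order embeds into $(\mathbb{Q},<)\subset\mathbb{R}$. The proof enumerates the elements and places each new element by back-and-forth style interpolation using density of $\mathbb{Q}$, for example choosing midpoints. This covers any countable chain of record algebras, even one not indexed monotonically.

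The only potential obstacle is the distinction between indices and algebras: stalled stages with repeated algebras and the antisymmetry issue. This is handled by working with the set $\mathcal{C}$ of distinct algebras, as above. The rest is routine.
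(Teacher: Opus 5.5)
Your proof is correct and has the same skeleton as the paper's: inclusion is a partial order, Assumption~\ref{ass:persistence} makes any two algebras on the history comparable, and a countable chain embeds in $\mathbb{R}$.

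The difference is in the last step. The paper obtains the embedding only by citing the fact that every countable linear order embeds into $\mathbb{Q}$. Your main route is the explicit first-occurrence map $\Sigma\mapsto n(\Sigma)$, which uses the $\mathbb{N}$-indexing of the history and is more elementary. The general citation, which you also give, has the advantage of not depending on that indexing.

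Your treatment of stalled stages is more precise than the paper's. Read as a relation on indices, $\Sigma_m\preceq\Sigma_n$ is only a preorder once $\Sigma_m=\Sigma_{m+1}$ can occur, so antisymmetry holds only on the set of distinct algebras. The paper's one-line proof leaves this implicit.

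One small aside on your remark about the general fact. The standard embedding into $\mathbb{Q}$ is a one-directional construction, not back-and-forth. Placing a new element beyond all previously placed ones also needs the fact that $\mathbb{Q}$ has no endpoints, not midpoints.
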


\begin{proof}
Set inclusion is reflexive, antisymmetric, and transitive, and record persistence makes any two algebras on the same history comparable. Every countable linear order embeds into the rationals and hence into the reals.
\end{proof}

The proposition defines temporal order, that is a before/after relation, without yet assigning quantitative duration intervals. The realized records also carry compatible Boolean valuations. Let $v_n:\Sigma_n\to\{0,1\}$ assign truth values to propositions certified by the first $n$ records.

\begin{theorem}[Growing record structure]
\label{thm:growing-records}
Under Assumption~\ref{ass:persistence},
\begin{equation}
 \Sigma_n\subseteq\Sigma_{n+1},
 \qquad
 v_{n+1}|_{\Sigma_n}=v_n.
 \label{eq:growing-record-system}
\end{equation}
The inclusion is strict exactly when the new record is not already determined by $\Sigma_n$.
\end{theorem}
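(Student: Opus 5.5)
The plan is to make the setting concrete and then read off the three claims. Fix a realized history $h=(r_1,r_2,\ldots)$. Model $\Sigma_n$ as the Boolean algebra on the space of record sequences generated by the events $\{R_k=r\}$ for $k\le n$ and $r\in\Omega_{R_k}$. Equivalently, $\Sigma_n$ is the algebra of sets depending only on the first $n$ coordinates, i.e.\ the algebra whose atoms are the cells of the partition by $h_n$. Define the valuation $v_n(A)=1$ if the realized history lies in $A$ and $v_n(A)=0$ otherwise, for $A\in\Sigma_n$. This is the two-valued homomorphism certified by the records, i.e.\ the unique one sending the atom determined by $h_n$ to $1$.

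\textbf{Inclusion.} The inclusion $\Sigma_n\subseteq\Sigma_{n+1}$ is exactly Assumption~\ref{ass:persistence}, in the form \eqref{eq:nested-records}. In the concrete model it is also immediate, since the generators of $\Sigma_n$ form a subset of those of $\Sigma_{n+1}$.

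\textbf{Compatibility of valuations.} Take $A\in\Sigma_n$. Then $A\in\Sigma_{n+1}$, and both $v_n(A)$ and $v_{n+1}(A)$ equal the indicator that the same realized history lies in $A$, so $v_{n+1}|_{\Sigma_n}=v_n$. Equivalently, the atom of $\Sigma_{n+1}$ selected by $h_{n+1}$ lies inside the atom of $\Sigma_n$ selected by $h_n$, because $h_{n+1}$ extends $h_n$. Persistence is what guarantees that the record $r_k$, for $k\le n$, is not overwritten at stage $n+1$. So both valuations read the same persisted coordinates.

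\textbf{Strictness criterion.} We have $\Sigma_{n+1}=\Sigma_n$ if and only if every generator $\{R_{n+1}=r\}$ already lies in $\Sigma_n$, that is, if and only if $R_{n+1}$ is $\Sigma_n$-measurable. The latter means $R_{n+1}$ is a function of $h_n$. Conversely, if some $\{R_{n+1}=r\}\notin\Sigma_n$, then that generator witnesses strict inclusion.

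The main obstacle is fixing what ``determined by $\Sigma_n$'' means, and checking that the criterion is stated at the right level. I would take it as measurability with respect to the full algebra, i.e.\ determination on all histories. This is the reading under which $\Sigma_{n+1}=\Sigma_n$ holds exactly.

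A pathwise reading, where $R_{n+1}$ is only determined on the realized atom, would make strictness fail to be equivalent. In that case I would add a remark tying the criterion to Born weights. Determination along the history means $p(r_{n+1}\mid h_n)=1$ by \eqref{eq:history-conditioning}, i.e.\ $\Pi_{r_{n+1}}|\psi_{h_n}\rangle=|\psi_{h_n}\rangle$. Up to null sets, this is the Born-probabilistic version of non-strict inclusion, and it anticipates the zero-surprisal case later in the paper.
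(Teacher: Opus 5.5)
Your proposal is correct and follows the paper's route. Inclusion is Assumption~\ref{ass:persistence}. The valuations agree because the earlier recorded values persist; your model, with $v_n$ as evaluation at the realized history, just makes automatic the paper's ``agree on generators, hence on the generated algebra'' step. Strictness is whether the new generators $\{R_{n+1}=r\}$ already lie in $\Sigma_n$.

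Your measurability reading of ``determined by $\Sigma_n$'' usefully sharpens what the paper leaves informal, but it needs the right sample space. On the full product space $\prod_k\Omega_{R_k}$, the coordinate $R_{n+1}$ is never a function of $h_n$ unless $\Omega_{R_{n+1}}$ is a singleton, so the criterion collapses to a triviality. For it to have content (e.g.\ repeated projective measurement), take the sample space to be the histories of positive Born weight. There ``determined'' means $p(\cdot\mid h_n)$ is a point mass for every $h_n$ with $P(h_n)>0$, not merely along the realized history as your closing pathwise remark suggests.
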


\begin{proof}
The algebra inclusion is Assumption~\ref{ass:persistence}. Persistence of the earlier recorded values makes $v_{n+1}$ and $v_n$ agree on the generators of $\Sigma_n$, hence on the Boolean algebra they generate. A record already determined by $\Sigma_n$ adds no new proposition; a record not determined by it enlarges the algebra.
\end{proof}

Record persistence therefore describes a growing structure of determinate
facts: as the history develops, new factual content is added while earlier
recorded content is retained. This growth should not, however, be interpreted
as the progressive revelation of a single pre-existing assignment of values
to all quantum observables. In Hilbert-space dimension at least three, the
Kochen--Specker theorem excludes such a global noncontextual Boolean valuation
\cite{KochenSpecker1967}. The temporal order obtained above is therefore an
order of accumulated factual records, rather than merely an ordering of labels
or the uncovering of an already determinate history.

\begin{remark}[Factual record content versus predictive state]
\label{rem:factual-versus-predictive}
The valued Boolean algebra $(\Sigma_n,v_n)$ represents the determinate factual
content accumulated by stage $n$; it is not by itself a complete predictive
state. Within the standard quantum formalism used in this paper, subsequent
Born probabilities also depend on the quantum state conditioned on that factual
history. It is therefore useful to distinguish schematically
\begin{equation}
z_n
=
\bigl(\Sigma_n,v_n;|\psi_{h_n}\rangle\bigr),
\label{eq:record-conditioned-state}
\end{equation}
where $(\Sigma_n,v_n)$ carries what has become factual and
$|\psi_{h_n}\rangle$ is the quantum state conditioned on those facts.

Under a further actualization these two components change in complementary
ways:
\begin{equation}
(\Sigma_n,v_n)
\longrightarrow
(\Sigma_{n+1},v_{n+1}),
\qquad
|\psi_{h_n}\rangle
\longrightarrow
|\psi_{h_{n+1}}\rangle .
\label{eq:dual-actualization-update}
\end{equation}
The first update enlarges the persistent factual record; the second is the
standard conditional quantum-state update used to assign probabilities to
future records.

Two descriptions may therefore carry the same accumulated factual records
while differing in their conditioned quantum states and hence in the
statistics of future records. They are operationally indistinguishable only if
they agree on all admissible future record statistics. The companion
space--matter paper makes this operational equivalence explicit. Nothing in
the clock representation theorem requires $(\Sigma_n,v_n)$ alone to determine
those statistics; it requires only the conditional Born weights along the
specified record filtration.
\end{remark}

An order alone does not determine duration. We now ask whether a specified record history admits a canonical additive interval functional.

\section{Actualization time from sequential records}
\label{sec:actualization-time}

We now turn from ordinal temporal order to the construction of a quantitative pathwise duration.

\subsection{Requirements on a pathwise actualization-time increment}

Consider one actualization step. Immediately before the step, the conditioned
state assigns probability $p\in(0,1]$ to the outcome that is subsequently
recorded. Let $d(p)$ denote the clock increment assigned to this realized
actualization. We impose three requirements.

\begin{itemize}
\item[(R1)] \emph{Born-weight sufficiency.} Once the record filtration and
the physical calibration of the clock have been specified, the increment assigned to one realized record depends only on its conditional Born probability $p$.

\item[(R2)] \emph{Record-refinement invariance.} For any $p,q\in(0,1]$, compare a fine-grained recorded fact of Born weight $pq$ with a description in which the same final fact is reached through a coarse record of weight $p$ followed by a refinement of conditional weight $q$, the coarse proposition being implied by the final record. The clock class is required to be invariant under this refinement, so
\begin{equation}
 d(pq)=d(p)+d(q).
 \label{eq:additivity-functional}
\end{equation}

\item[(R3)] \emph{Regularity and nontriviality.}
The function $d$ is continuous and nonnegative, and is not identically zero.
\end{itemize}

Requirement (R1) is the main restriction defining the class of clocks studied
here. A more general physical clock could assign increments that depend not only
on the conditional Born weight but also on dynamical or implementation-specific
quantities, such as the interaction Hamiltonian, the duration of the recording
interaction, the stability of the memory, or its energetic cost. Such clocks
are not excluded as physical possibilities; they simply fall outside the
representation theorem proved below. The purpose of (R1) is to isolate the
part of the clock reading that can be determined from the realized record
statistics alone. This does not imply that dynamics is irrelevant to physical timekeeping.
The underlying dynamics determines whether records are produced, how stable
they are, and at what physical rate they occur. Requirement (R1) states only
that, once a particular record process and its calibration have been specified,
the increment associated with an individual realized record carries no
additional event-dependent information beyond its conditional Born weight.
Fixed properties of the physical clock may still enter through the overall
calibration scale $\sigma$. Requirement (R2) is an additivity condition stating that when two descriptions encode the same nested factual refinement, the reading may not depend on where the description inserts the intermediate record. The theorem is therefore a representation theorem for a specific class of clocks on a \emph{given} record filtration, not a derivation of every notion of time and not a rule selecting the physical clock mechanism.

    \begin{theorem}[Representation theorem for a surprisal record clock]
    \label{thm:surprisal}
    The unique nontrivial function satisfying (R1)--(R3), up to an overall
    positive multiplicative constant, is
    \begin{equation}
    d(p)=-\sigma\ln p,
    \qquad \sigma>0.
    \label{eq:surprisal}
    \end{equation}
    The constant $\sigma$ has units of time and fixes the overall calibration of
    the clock. Consequently, the accumulated actualization-time reading assigned to a record history $h_n$, denoted by $\tau_{\rm A}(h_n)$, is
    \begin{equation}
    \tau_{\rm A}(h_n)
    =
    \sum_{k=1}^{n}d\bigl(p(r_k\mid h_{k-1})\bigr)
    =
    -\sigma\ln P(h_n),
    \label{eq:pathwise-clock}
    \end{equation}
    where $P(h_n)$ is the Born probability of the realized record history defined in Eq.~\eqref{eq:history-probability}.
    \end{theorem}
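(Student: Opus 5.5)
The plan is to reduce (R2) to the Cauchy functional equation on the additive half-line. Set $f(x)=d(e^{-x})$ for $x\in[0,\infty)$. Since $p\mapsto -\ln p$ is a homeomorphism from $(0,1]$ onto $[0,\infty)$, requirement (R2) becomes $f(x+y)=f(x)+f(y)$ for all $x,y\ge 0$. Requirement (R3) makes $f$ continuous and nonnegative. Taking $x=y=0$ gives $f(0)=0$, so $d(1)=0$: a certain outcome contributes nothing.

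Next I show that $f$ is linear. Induction gives $f(nx)=nf(x)$ for every integer $n\ge1$. Applying this to $x=m/n$ gives $f(m/n)=(m/n)f(1)$ for all nonnegative rationals. Continuity then extends this to $f(x)=cx$ on all of $[0,\infty)$, where $c=f(1)$. Nonnegativity forces $c\ge 0$, and nontriviality in (R3) excludes $c=0$. Writing $\sigma=c>0$ yields $d(p)=-\sigma\ln p$.

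Conversely, this $d$ visibly satisfies (R1)--(R3). Any two admissible choices differ only in $\sigma$, which gives uniqueness up to a positive multiplicative constant. Since $-\ln p$ is dimensionless, $\sigma$ carries the units of the increment, namely time.

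For the pathwise formula \eqref{eq:pathwise-clock}, I define $\tau_{\rm A}(h_n)$ as the sum of the increments along the realized history. By (R1) each increment is $d(p(r_k\mid h_{k-1}))$. Summing the logarithms and using the chain rule \eqref{eq:history-probability} gives $-\sigma\ln\prod_k p(r_k\mid h_{k-1})=-\sigma\ln P(h_n)$. Each factor is positive, since the history was realized, so the logarithms are defined.

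There is no serious obstacle; the delicate step is the extension from rationals to reals. It relies on the continuity in (R3), which must hold on the whole of $(0,1]$, including near $p=1$. Without continuity, pathological Hamel-basis solutions of the Cauchy equation could appear. An alternative route would use nonnegativity alone: it makes $f$ monotone, since $f(x+y)=f(x)+f(y)\ge f(x)$, and monotonicity also suffices to exclude those solutions.
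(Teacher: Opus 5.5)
Your proposal is correct and follows the paper's own proof: the substitution $f(x)=d(e^{-x})$ turns (R2) into Cauchy's equation on $[0,\infty)$, continuity forces $f(x)=cx$, nonnegativity and nontriviality give $c=\sigma>0$, and the chain rule for $P(h_n)$ yields the pathwise formula. You go slightly further than the paper by writing out the rational-to-real extension. You also exclude $c=0$ explicitly through the nontriviality clause of (R3), where the paper appeals only loosely to ``calibration''; your version is the more precise of the two.
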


\begin{proof}
Define $f(x)=d(e^{-x})$ for $x\geq0$. Requirement (R2) gives
$f(x+y)=f(x)+f(y)$. Continuity implies $f(x)=cx$ for a constant $c$. Nonnegativity gives
$c\geq0$, and the calibration in (R3) fixes $c=\sigma$. Equation~\eqref{eq:pathwise-clock}
then follows from the probability chain rule in Eq.~\eqref{eq:history-probability}.
\end{proof}

The function $-\ln p$ is the \emph{self-information} of Shannon
\cite{Shannon1948}, for which Tribus coined the term \emph{surprisal}
\cite{Tribus1961}: it vanishes for a certain event, diverges for an
arbitrarily rare one, and is the unique continuous assignment for which
independent surprises add. The present theorem gives the same functional a
different reading: not information delivered to an observer, but an additive actualization-time
increment registered by the record process. The logarithm is forced by the multiplicative composition of conditional probabilities. The
same mathematical structure underlies the additivity of self-information in information
theory. Here it is interpreted as an additive path functional on realized record histories: an outcome that was
certain contributes no increment, while an outcome of small prior probability contributes a
larger increment. 

The representation theorem fixes the dependence of the clock
increment on the conditional Born weight, but not its absolute time scale.
For a given physical record process, $\sigma$ must be fixed by calibration,
for example by matching the mean clock reading to a chosen reference time
scale. The clock reading defined by the theorem is also relative to the specified record
filtration. Enlarging that filtration by adding genuinely new recorded
information will in general change the reading; this should be distinguished
from a mere redescription of the same factual record process. The comparison
of clocks associated with different filtrations is developed in
Sec.~\ref{subsec:filtration-covariance}.

\subsection{Ensemble actualization time and Shannon entropy}

The pathwise actualization time in Eq.~\eqref{eq:pathwise-clock} fluctuates from one history to another. Write $\tau_n:=\tau_{\rm A}(R_1,\ldots,R_n)$ for the corresponding random variable on histories. Averaging it over all histories of depth $n$ gives an ensemble clock, which we analyze now. In what follows and throughout, uppercase $R_k$ denotes the record outcome at step $k$
regarded as a random variable under the Born measure on histories, with
realized values $r_k$; a history $h_n=(r_1,\ldots,r_n)$ is thus one
realization of $(R_1,\ldots,R_n)$, and $H(\cdot)$ and $H(\cdot\mid\cdot)$
are the Shannon entropy and conditional entropy of these variables.

\begin{proposition}[Shannon mean of the pathwise clock]
\label{prop:shannon-mean}
Let $\cH_n$ be the set of record histories of length $n$. Then
\begin{equation}
\mathbb{E}[\tau_n]
=
-\sigma\sum_{h_n\in\cH_n}P(h_n)\ln P(h_n)
=
\sigma H(R_1,\ldots,R_n),
\label{eq:mean-clock}
\end{equation}
where $H$ denotes Shannon entropy in natural units. Equivalently,
\begin{equation}
\mathbb{E}[\tau_n]
=
\sigma\sum_{k=1}^{n}H(R_k\mid R_1,\ldots,R_{k-1}).
\label{eq:conditional-entropy-sum}
\end{equation}
\end{proposition}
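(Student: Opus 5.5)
The plan is to take expectations of the pathwise formula in Theorem~\ref{thm:surprisal} and then use the chain rule for Shannon entropy.

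\textbf{Step 1 (mean as joint entropy).} By Eq.~\eqref{eq:pathwise-clock}, $\tau_n=-\sigma\ln P(R_1,\ldots,R_n)$ as a random variable on $\cH_n$. The Born measure on histories assigns weight $P(h_n)$ to each $h_n$, as in Eq.~\eqref{eq:history-probability}. Taking the expectation term by term gives the first equality in Eq.~\eqref{eq:mean-clock}. The second equality is the definition of the joint Shannon entropy in natural units. Histories with $P(h_n)=0$ never occur, so they are excluded from the sum, consistent with the convention $0\ln 0=0$. Since $P(h_n)$ is computed by iterating Eq.~\eqref{eq:history-conditioning}, the Born measure on $\cH_n$ is a genuine probability measure: $\sum_{h_n}P(h_n)=1$. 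This follows by induction on $n$ from $\sum_r\Pi_r=\mathds{1}$ and the normalization of each conditioned state.

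\textbf{Step 2 (conditional-entropy decomposition).} I would work at the level of increments rather than invoke the entropy chain rule abstractly. Write $\tau_n=\sum_{k=1}^n d\bigl(p(R_k\mid H_{k-1})\bigr)$, with $H_{k-1}=(R_1,\ldots,R_{k-1})$. By linearity,
\[
\mathbb{E}[\tau_n]=\sigma\sum_{k=1}^n \mathbb{E}\bigl[-\ln p(R_k\mid H_{k-1})\bigr].
\]
Conditioning on $H_{k-1}=h_{k-1}$, the inner expectation is $-\sum_{r}p(r\mid h_{k-1})\ln p(r\mid h_{k-1})$. Here I use that the marginal of the Born measure on the first $k$ records is the product of the first $k$ conditional factors, which follows by summing out later records using normalization as in Step 1. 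Averaging this over $h_{k-1}$ with weights $P(h_{k-1})$ gives $H(R_k\mid R_1,\ldots,R_{k-1})$ by definition, which yields Eq.~\eqref{eq:conditional-entropy-sum}. Alternatively, the same identity is just the chain rule $H(R_1,\ldots,R_n)=\sum_k H(R_k\mid R_{<k})$ applied to the product form of Eq.~\eqref{eq:history-probability}.

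\textbf{Main obstacle.} The finite case is routine, so the main issue is the countable outcome sets allowed in Sec.~\ref{sec:records}. There the sums may diverge and the rearrangements need justification. Every summand $-P\ln P$ and $-p\ln p$ is nonnegative, so Tonelli's theorem lets me interchange sums and iterated expectations freely. The identities then hold in $[0,\infty]$, with both sides infinite simultaneously. I would state this explicitly rather than assume finite entropy.
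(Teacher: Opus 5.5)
Your proposal is correct and follows the paper's own argument: substitute the pathwise formula $\tau_n=-\sigma\ln P(h_n)$ into the expectation, recognize the joint Shannon entropy, and obtain the conditional-entropy sum from the chain rule, with your increment-level computation being that chain rule written out. The normalization check and the Tonelli remark for countable outcome sets, giving the identities in $[0,\infty]$, are sound additions that the paper leaves implicit.
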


\begin{proof}
Substituting Eq.~\eqref{eq:pathwise-clock} into the expectation gives the first equality.
The second is the definition of Shannon entropy. Equation~\eqref{eq:conditional-entropy-sum}
follows from the Shannon chain rule.
\end{proof}

Thus the primary clock is pathwise surprisal, while Shannon entropy is its ensemble mean. This distinction avoids identifying the actualization-time reading along one realized record
history with an entropy of an ensemble that also contains unrealized alternatives.

\subsection{Fluctuations and R\'enyi entropies}

The mean clock reading captures only the first moment of the random variable
$\tau_n$. Its full probability distribution contains additional information
about the fluctuations of realized clock histories around that mean. A natural
way to characterize this distribution is through the moment-generating function
of the dimensionless clock variable
\[
S_n:=\frac{\tau_n}{\sigma}=-\ln P(H_n),
\]
where $H_n$ denotes the random realized history. Define
\begin{equation}
M_n(s):=\mathbb E[e^{sS_n}],
\qquad
K_n(s):=\ln M_n(s),
\end{equation}
whenever these quantities exist. The derivatives of $M_n$ at $s=0$ generate
the moments of $S_n$, while the derivatives of $K_n$ generate its cumulants:
\begin{equation}
M_n^{(m)}(0)=\mathbb E[S_n^m],
\qquad
K_n^{(m)}(0)=\kappa_m(S_n).
\end{equation}
Thus $K_n'(0)$ gives the mean, $K_n''(0)$ the variance, and higher derivatives
encode the higher-order fluctuations of the realized clock. For the surprisal clock, these generating functions are determined directly
by the R\'enyi entropy spectrum of the record distribution. Indeed, as shown in the proposition
below,
\begin{equation}
M_n(s)
=
\sum_{h_n}P(h_n)^{1-s},
\qquad
K_n(s)
=
sH_{1-s}(P).
\end{equation}
The Shannon entropy appears as the
first cumulant in the limit $s\to0$, the varentropy as the second cumulant,
and the higher derivatives of the R\'enyi transform determine the higher
cumulants of the clock. In this sense, the R\'enyi spectrum characterizes the full distribution of pathwise clock fluctuations.

\begin{proposition}[R\'enyi transform of the clock]
\label{prop:renyi}
For a finite history ensemble, or more generally whenever the following moment exists,
\begin{equation}
\left\langle e^{s\tau_n/\sigma}\right\rangle
=
\sum_{h_n\in\cH_n}P(h_n)^{1-s}
=
\exp\!\left[sH_{1-s}(P)\right],
\label{eq:renyi-transform}
\end{equation}
where $H_\alpha(P)$ is the R\'enyi entropy of order $\alpha$. In particular,
\begin{equation}
\left\langle e^{-\tau_n/\sigma}\right\rangle
=
\sum_{h_n}P(h_n)^2
=
e^{-H_2(P)}.
\label{eq:collision-identity}
\end{equation}
\end{proposition}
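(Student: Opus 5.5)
The plan is to compute the moment-generating function directly from the pathwise representation of Theorem~\ref{thm:surprisal} and then match the result to the definition of R\'enyi entropy. By Eq.~\eqref{eq:pathwise-clock}, $\tau_n/\sigma=S_n=-\ln P(H_n)$ on each realized history. The random variable $e^{s\tau_n/\sigma}$ therefore equals $P(h_n)^{-s}$ on the history $h_n$. Taking the expectation under the Born measure on $\cH_n$ gives
\begin{equation*}
\left\langle e^{s\tau_n/\sigma}\right\rangle=\sum_{h_n\in\cH_n}P(h_n)\,P(h_n)^{-s}=\sum_{h_n\in\cH_n}P(h_n)^{1-s},
\end{equation*}
which is the first equality in Eq.~\eqref{eq:renyi-transform}. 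Histories with $P(h_n)=0$ are never realized and carry zero weight, so the sum runs over the support only.

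For the second equality, I will use the standard definition $H_\alpha(P)=\frac{1}{1-\alpha}\ln\sum_h P(h)^\alpha$ for $\alpha\neq1$. Setting $\alpha=1-s$ gives $1-\alpha=s$, so $sH_{1-s}(P)=\ln\sum_h P(h)^{1-s}$. Exponentiating yields the claim for $s\neq0$. At $s=0$ both sides equal $1$, using the Shannon limit $H_1=\lim_{\alpha\to1}H_\alpha$ together with $s\cdot H_1\to0$, or simply $\sum P=1$. The collision identity Eq.~\eqref{eq:collision-identity} is then the special case $s=-1$: we get $\sum P^2=\exp[-H_2(P)]$.

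The only step needing care is the countable case, which is why the statement adds the phrase ``whenever the moment exists.'' For $s\le 1$ each term $P^{1-s}$ is nonnegative, so Tonelli's theorem allows the expectation to be computed term by term. The expectation and the sum are then equal as extended reals, possibly both $+\infty$. The identity with $\exp[sH_{1-s}]$ holds exactly when the sum is finite, and it also holds in the extended sense with $H_{1-s}=+\infty$ for $s>0$. For $s<0$, the bound $P^{1-s}\le P$ makes the sum automatically finite. For $s>1$ the terms blow up on the support, so the finiteness assumption cannot be dropped in general; for finite $\cH_n$ the sum is always finite. Beyond this bookkeeping there is no real obstacle; the proposition is a direct rewriting of the surprisal representation.
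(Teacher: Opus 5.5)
Your proposal is correct and follows the paper's proof: substitute $e^{s\tau_n/\sigma}=P(h_n)^{-s}$ from the pathwise clock formula, weight by $P(h_n)$ and sum over histories, then use the R\'enyi definition with $\alpha=1-s$, taking $s=-1$ for the collision identity. Your extra bookkeeping for countable ensembles goes beyond the paper and is sound, except that Tonelli needs no restriction to $s\le 1$: $e^{sS_n}\ge 0$ for every real $s$.
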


\begin{proof}
By Eq.~\eqref{eq:pathwise-clock},
$e^{s\tau_n/\sigma}=P(h_n)^{-s}$. Multiplication by the sampling weight $P(h_n)$ and summation
over histories gives $\sum_hP(h)^{1-s}$. The R\'enyi identity follows from
\begin{equation}
H_\alpha(P)=\frac{1}{1-\alpha}\ln\sum_hP(h)^\alpha
\end{equation}
with $\alpha=1-s$.
\end{proof}

Differentiating the logarithm of Eq.~\eqref{eq:renyi-transform} at $s=0$
recovers the mean in Eq.~\eqref{eq:mean-clock}. The second cumulant is
\begin{equation}
\operatorname{Var}(\tau_n)
=
\sigma^2 V(P),
\label{eq:varentropy}
\end{equation}
where
\begin{equation}
V(P)
:=
\operatorname{Var}_P[-\ln P(h_n)]
=
\sum_{h_n} P(h_n)
\left[
-\ln P(h_n)-H(P)
\right]^2
\label{eq:varentropy-explicit}
\end{equation}
is the varentropy of the record distribution
\cite{Renyi1961,CoverThomas2006}. Thus the varentropy measures the dispersion of realized clock readings around
their ensemble mean: record histories with identical surprisal would have
$V(P)=0$, whereas heterogeneous history probabilities generate clock
fluctuations.

These results can be used to obtain the full probability law of the surprisal clock. For a fixed record depth $n$, the pathwise clock
\[
\tau_n(h)=-\sigma\ln P(h)
\]
is a random variable on the history space $\mathcal H_n$, and its
probability law can indeed be obtained from the R\'enyi entropy spectrum.

\begin{proposition}[Full law of the surprisal clock]
\label{prop:full-clock-law}
For a finite history ensemble, the probability measure of $\tau_n$ is
\begin{equation}
 \mu_{\tau_n}
 =
 \sum_{h\in\mathcal H_n}
 P(h)\,
 \delta_{-\sigma\ln P(h)} ,
 \label{eq:clock-pushforward}
\end{equation}
or equivalently
\begin{equation}
 \Pr(\tau_n\le t)
 =
 \sum_{\substack{h\in\mathcal H_n\\
 -\sigma\ln P(h)\le t}}P(h).
 \label{eq:clock-cdf}
\end{equation}
Its characteristic function is
\begin{equation}
 \Phi_n(\omega)
 :=
 \mathbb E[e^{i\omega\tau_n}]
 =
 \sum_{h\in\mathcal H_n}
 P(h)^{\,1-i\omega\sigma}.
 \label{eq:clock-characteristic}
\end{equation}
Hence $\Phi_n$ uniquely determines the full probability distribution of the
internal clock.  Wherever the R\'enyi entropy admits the corresponding
complex-order continuation, this may be written as
\begin{equation}
 \Phi_n(\omega)
 =
 \exp\!\left[
 i\omega\sigma\,
 H_{1-i\omega\sigma}(P)
 \right].
 \label{eq:clock-renyi-characteristic}
\end{equation}
\end{proposition}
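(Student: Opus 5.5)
The plan is to treat $\tau_n$ as the pushforward of the Born measure on histories under the map $h\mapsto -\sigma\ln P(h)$, and then read off each claim in turn. First I restrict to the histories with $P(h)>0$. Those with zero weight contribute nothing to any expectation, so the clock is well defined almost surely. For any bounded measurable $g$ I then have $\mathbb E[g(\tau_n)]=\sum_{h\in\mathcal H_n}P(h)\,g(-\sigma\ln P(h))$, which is exactly the statement that the law of $\tau_n$ is the finite mixture of Dirac masses in Eq.~\eqref{eq:clock-pushforward}. Choosing $g=\mathds{1}_{(-\infty,t]}$ gives the distribution function in Eq.~\eqref{eq:clock-cdf}. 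Several histories may share the same probability and hence the same atom; this causes no difficulty, because their weights simply add.

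Next I take $g(x)=e^{i\omega x}$ and use Eq.~\eqref{eq:pathwise-clock}, exactly as in the proof of Proposition~\ref{prop:renyi}. This gives $e^{i\omega\tau_n}=P(h)^{-i\omega\sigma}$, where the power is defined through the real logarithm of $P(h)>0$. Multiplying by the weight $P(h)$ and summing yields Eq.~\eqref{eq:clock-characteristic}. Equivalently, it is the moment-generating function $M_n(s)$ of Proposition~\ref{prop:renyi} evaluated at the imaginary argument $s=i\omega\sigma$. This step is legitimate because the sum is finite, so $M_n$ is entire in $s$. The claim that $\Phi_n$ determines the distribution is then the standard uniqueness theorem for characteristic functions (L\'evy inversion). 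For a finite atomic measure it is even more elementary: each atom's weight can be recovered as $\lim_{T\to\infty}\frac1{2T}\int_{-T}^{T}e^{-i\omega t}\Phi_n(\omega)\,\dd\omega$.

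For the R\'enyi form I continue the identity $\sum_hP(h)^{1-s}=\exp[sH_{1-s}(P)]$ from real $s$ to complex $s$. The real-$s$ identity follows from the definition $H_\alpha=\frac1{1-\alpha}\ln\sum_hP^\alpha$. Setting $\alpha=1-i\omega\sigma$ gives $\frac{1}{1-\alpha}=\frac{1}{i\omega\sigma}$, so formally $i\omega\sigma H_{1-i\omega\sigma}(P)=\ln\Phi_n(\omega)$, which is Eq.~\eqref{eq:clock-renyi-characteristic}. At $\omega=0$ I take the limiting value, where $H_1$ is the Shannon entropy and both sides equal one.

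The main obstacle is this last step: making sense of the complex logarithm. The sum $\Phi_n(\omega)$ may vanish for some real $\omega$, and where it does not vanish the logarithm is multivalued. I would handle this by \emph{defining} the complex-order R\'enyi entropy along a path. Starting from $\omega=0$, where $\Phi_n=1$, I take the continuous branch of $\ln\Phi_n$ and continue it along the real $\omega$-axis for as long as $\Phi_n\neq0$. This is the meaning of the qualifier ``wherever the R\'enyi entropy admits the corresponding complex-order continuation''. With that definition, Eq.~\eqref{eq:clock-renyi-characteristic} holds by construction. It is consistent with the real-order definition because both sides are analytic in $s$ near $0$ and agree for real $s$.
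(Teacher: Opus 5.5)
Your proposal is correct and follows the paper's proof essentially step for step: you push the Born history measure forward under $h\mapsto-\sigma\ln P(h)$, use $e^{i\omega\tau_n(h)}=P(h)^{-i\omega\sigma}$, invoke uniqueness of characteristic functions, and apply the R\'enyi identity at $\alpha=1-i\omega\sigma$. Your extra care about zero-weight histories, coincident atoms, and the choice of branch for $\ln\Phi_n$ (which can genuinely vanish at some real $\omega$) spells out what the paper leaves to the qualifier ``whenever this continuation is well defined''.
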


\begin{proof}
Equation~\eqref{eq:clock-pushforward} follows directly by pushing the history
measure $P$ through the map $h\mapsto-\sigma\ln P(h)$.  Equation
\eqref{eq:clock-characteristic} then follows from
$e^{i\omega\tau_n(h)}=P(h)^{-i\omega\sigma}$.  A characteristic function
uniquely determines the probability law.  The R\'enyi representation follows
from
\[
 \sum_h P(h)^\alpha
 =
 \exp[(1-\alpha)H_\alpha(P)]
\]
with $\alpha=1-i\omega\sigma$, whenever this continuation is well defined.
\end{proof}

Thus the information spectrum of the record history becomes, after the
calibration by $\sigma$, a spectrum of possible internal durations.  Mean,
varentropy, higher cumulants, central-limit behavior, and large deviations
are different statistical properties or asymptotic regimes of
this underlying clock distribution. In the specific case of an i.i.d.\ record process with single-step law $p(r)$, the elementary
clock increment has the discrete law
\begin{equation}
 \mu_{\Delta\tau}
 =
 \sum_r p(r)\,
 \delta_{-\sigma\ln p(r)},
 \label{eq:single-step-clock-law}
\end{equation}
and the accumulated clock after $n$ records is the $n$-fold convolution
\begin{equation}
 \mu_{\tau_n}=\mu_{\Delta\tau}^{*n}.
 \label{eq:clock-convolution}
\end{equation}
This finite-$n$ law is exact; the Gaussian and large-deviation results provided below
describe, respectively, its typical and rare-fluctuation asymptotics.

\subsection{Actualization as information gain: the Doob decomposition of the clock}
\label{sec:information-gain}

The pathwise and ensemble descriptions of the clock can be unified by viewing
the accumulated surprisal as an adapted stochastic process on the record
filtration. Under Assumption~\ref{ass:persistence}, the record variables indeed define a classical
stochastic process governed by the Born measure $P$ on histories. Let
$\mathcal{F}_n$ denote the $\sigma$-algebra generated by
$(R_1,\ldots,R_n)$, and define the surprisal random variables
\begin{equation}
S_k:=-\ln p(R_k\mid R_1,\ldots,R_{k-1}),
\end{equation}
with realized values
\[
s_k=-\ln p(r_k\mid h_{k-1}),
\]
assuming $\mathbb{E}[S_k]<\infty$ for every $k$.

\begin{proposition}[Doob decomposition of the actualization clock]
\label{prop:doob-decomposition}
Define
\begin{align}
 \tau_n&:=\sigma\sum_{k=1}^{n}S_k,\notag\\
 A_n&:=\sigma\sum_{k=1}^{n}
 \mathbb{E}\!\left[S_k\mid\mathcal{F}_{k-1}\right]\notag\\
 &=\sigma\sum_{k=1}^{n}
 H\bigl(p(\cdot\mid R_1,\ldots,R_{k-1})\bigr),\notag\\
 M_n&:=\tau_n-A_n.
\label{eq:doob-objects}
\end{align}
Then $(A_n)$ is predictable (each increment is $\mathcal{F}_{n-1}$-measurable) and
nondecreasing with $A_0=0$, $(M_n)$ is an $(\mathcal{F}_n,P)$-martingale with $M_0=0$,
and
\begin{equation}
 \tau_n=A_n+M_n
\label{eq:doob-clock}
\end{equation}
is the unique such decomposition of the surprisal clock into a predictable nondecreasing
process started at zero and a martingale.
\end{proposition}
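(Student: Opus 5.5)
The plan is to verify the four defining properties directly from the definitions in Eq.~\eqref{eq:doob-objects}, and then obtain uniqueness from the standard argument that a predictable martingale started at zero is constant.

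\textbf{Step 1: the compensator.} The key identity is the conditional expectation of one surprisal increment given the past. On the event $\{(R_1,\ldots,R_{k-1})=h_{k-1}\}$ with $P(h_{k-1})>0$, the conditional law of $R_k$ is $p(\cdot\mid h_{k-1})$, so
\begin{equation*}
\mathbb{E}[S_k\mid\mathcal{F}_{k-1}]
=-\sum_{r}p(r\mid h_{k-1})\ln p(r\mid h_{k-1})
=H\bigl(p(\cdot\mid h_{k-1})\bigr).
\end{equation*}
Terms with $p(r\mid h_{k-1})=0$ are assigned zero weight, consistent with the convention $0\ln 0=0$. This is the second expression for $A_n$. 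Because this increment is a function of $(R_1,\ldots,R_{k-1})$ only, it is $\mathcal{F}_{k-1}$-measurable, so $(A_n)$ is predictable. Shannon entropy is nonnegative, hence $A_n$ is nondecreasing. The empty sum gives $A_0=0$. Integrability follows from the assumption $\mathbb{E}[S_k]<\infty$, since $\mathbb{E}[A_n]=\sigma\sum_{k\le n}\mathbb{E}[S_k]<\infty$. This also makes every conditional entropy finite almost surely, which matters in the countable-outcome case.

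\textbf{Step 2: the martingale part.} Write
\begin{equation*}
M_n-M_{n-1}=\sigma\bigl(S_n-\mathbb{E}[S_n\mid\mathcal{F}_{n-1}]\bigr).
\end{equation*}
This increment is $\mathcal{F}_n$-measurable and integrable. Its conditional expectation given $\mathcal{F}_{n-1}$ vanishes by the tower property. Hence $(M_n)$ is an $(\mathcal{F}_n,P)$-martingale, with $M_0=\tau_0-A_0=0$. The identity $\tau_n=A_n+M_n$ holds by definition.

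\textbf{Step 3: uniqueness.} Suppose $\tau_n=A'_n+M'_n$ is another decomposition, with $A'$ predictable, $A'_0=0$, and $M'$ a martingale with $M'_0=0$. Then $D_n:=A_n-A'_n=M'_n-M_n$ is both predictable and a martingale. It follows that
\begin{equation*}
D_n=\mathbb{E}[D_n\mid\mathcal{F}_{n-1}]=D_{n-1}.
\end{equation*}
The first equality uses that $D_n$ is $\mathcal{F}_{n-1}$-measurable, and the second uses the martingale property. Hence $D_n=D_0=0$ almost surely. Uniqueness therefore holds already among predictable processes started at zero, and the monotonicity of $A$ is an extra property rather than a requirement for uniqueness.

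The only genuine obstacle is measure-theoretic care in the countable-outcome case. There the conditional expectation must be justified as a sum over possibly infinitely many outcomes, and histories of zero probability must be handled. I would invoke nonnegativity of $S_k$ and monotone convergence to exchange the sum and the expectation. Histories with $P(h_{k-1})=0$ form a null set and can be ignored, because all statements are almost sure.
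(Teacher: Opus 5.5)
Your proposal is correct and follows the paper's proof. The paper likewise obtains the compensator increment as the conditional Shannon entropy, reads off predictability and monotonicity from nonnegativity, and gets the martingale property from the vanishing conditional mean of the increment. The differences are minor: you prove uniqueness directly (a predictable martingale started at zero is constant) where the paper simply cites Doob uniqueness, and you supply the integrability and countable-alphabet details that the paper leaves implicit.
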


\begin{proof}
The increment $A_n-A_{n-1}=\sigma H\bigl(p(\cdot\mid R_1,\ldots,R_{n-1})\bigr)$ is a
function of $R_1,\ldots,R_{n-1}$ alone, hence $\mathcal{F}_{n-1}$-measurable, and it is
nonnegative because Shannon entropies are. Conditioning the clock increment on the past,
\begin{align}
 &\mathbb{E}\bigl[\tau_n-\tau_{n-1}\,\big|\,\mathcal{F}_{n-1}\bigr]\notag\\
 &\quad=\sigma\sum_{r}p(r\mid R_1,\ldots,R_{n-1})
 \bigl[-\ln p(r\mid R_1,\ldots,R_{n-1})\bigr]\notag\\
 &\quad=\sigma H\bigl(p(\cdot\mid R_1,\ldots,R_{n-1})\bigr)
 =A_n-A_{n-1}.
\end{align}
so that $\mathbb{E}[M_n-M_{n-1}\mid\mathcal{F}_{n-1}]=0$. Uniqueness is the standard
uniqueness of the Doob decomposition of an integrable adapted process.
\end{proof}

Taking expectations in Eq.~\eqref{eq:doob-clock} recovers
Eq.~\eqref{eq:conditional-entropy-sum}, since
$\mathbb{E}\bigl[H(p(\cdot\mid h_{k-1}))\bigr]=H(R_k\mid R_1,\ldots,R_{k-1})$. The
decomposition, however, contains additional information: the predictable process $A_n$ is the \emph{entropic compensator} of the
realized surprisal clock. Thus the pathwise clock decomposes into a
predictable, nondecreasing component, determined at each stage by the
conditional Shannon entropy of the next record, and a centered martingale
$M_n$ describing the fluctuations of the realized history around that
predictable component. Two consequences of this result are noteworthy. First, the
pathwise directionality of the clock is given a quantitative content.  Since $0<p\le1$, every increment satisfies $d(p)\ge0$, with equality exactly for a conditionally certain record, so accumulated record time cannot decrease along a history; the decomposition refines this orientation: the
predictable compensator $A_n$ carries the systematic growth, while the martingale part
$M_n$ carries the mean-zero fluctuations about it. Second, the clock-rate functional $\nu(\delta t)$ of
Section~\ref{sec:zeno-time}, Eq.~\eqref{eq:clock-rate}, is recognized as the compensator
rate per unit of laboratory time. The Zeno, Markovian, and anti-Zeno regimes classify
the growth of $A_n$, while the varentropy identity Eq.~\eqref{eq:varentropy} controls
the martingale part $M_n$.

The following corollary relates to the special regimes in which the clock
admits classical limit theorems. Note that neither
independence nor stationarity is assumed anywhere else in the paper.

\begin{corollary}[Law of large numbers and clock fluctuations]
\label{cor:clock-asymptotics}
If the record process is independent and identically distributed with per-step Born
distribution $p$, then $\tau_n/n\to\sigma H(p)$ almost surely, and
\begin{equation}
 \frac{\tau_n-n\sigma H(p)}{\sigma\sqrt{n\,V(p)}}
 \;\Longrightarrow\;\mathcal{N}(0,1),
\label{eq:clock-clt}
\end{equation}
where $V(p):=\operatorname{Var}\bigl[-\ln p(r)\bigr]\in(0,\infty)$ is the per-step varentropy, cf.\ Eq.~\eqref{eq:varentropy}. More generally,
for a stationary ergodic record process over a finite alphabet, with entropy rate
$h_\infty$, the Shannon--McMillan--Breiman theorem gives $\tau_n/n\to\sigma h_\infty$ almost surely and
in $L^1$: the actualization-time reading per record converges to $\sigma$ times the entropy rate of
the record process.
\end{corollary}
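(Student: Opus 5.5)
The plan is to reduce both statements to classical limit theorems applied to the surprisal increments $S_k$. In the i.i.d.\ case the conditional probability $p(R_k\mid R_1,\ldots,R_{k-1})$ equals $p(R_k)$, so $S_k=-\ln p(R_k)$ is a function of $R_k$ alone. Hence $(S_k)_{k\ge1}$ is an i.i.d.\ sequence of real random variables, and $\tau_n=\sigma\sum_{k=1}^n S_k$ is a scaled random walk.

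\emph{Law of large numbers.} First I check integrability. We have $\mathbb{E}[S_1]=H(p)$, which is finite by the standing assumption $\mathbb{E}[S_k]<\infty$; for a finite alphabet it is automatic, since $H(p)\le\ln|\Omega_R|$. Kolmogorov's strong law then gives $\tau_n/n\to\sigma\,\mathbb{E}[S_1]=\sigma H(p)$ almost surely.

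\emph{Central limit theorem.} Next I need $\mathbb{E}[S_1^2]<\infty$. For a finite alphabet this holds because every $S_1$ value with positive probability is finite. For countable alphabets I will take finiteness of $V(p)$ as part of the hypothesis, which the statement encodes as $V(p)\in(0,\infty)$. Positivity of $V(p)$ excludes the degenerate case where $p$ is uniform on its support. The Lindeberg--L\'evy theorem then yields Eq.~\eqref{eq:clock-clt}, with mean $nH(p)$ and variance $nV(p)$ for $\sum_k S_k$, the latter being consistent with Eq.~\eqref{eq:varentropy} for product measures. One could alternatively phrase this as the martingale CLT applied to $M_n$ from Proposition~\ref{prop:doob-decomposition}, since in the i.i.d.\ case $A_n=n\sigma H(p)$ is deterministic. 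The direct i.i.d.\ route is simpler, however.

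\emph{Stationary ergodic case.} Here Eq.~\eqref{eq:pathwise-clock} gives $\tau_n/n=-\sigma\, n^{-1}\ln P(R_1,\ldots,R_n)$, which is exactly $\sigma$ times the normalized log-likelihood in the Shannon--McMillan--Breiman theorem. For a stationary ergodic process over a finite alphabet, that theorem (in the Breiman/Chung form) gives almost-sure convergence to $h_\infty$ as well as $L^1$ convergence. The main point requiring care is the $L^1$ statement. I would either quote it directly, or derive it from almost-sure convergence plus uniform integrability of $-n^{-1}\ln P(h_n)$. This uniform integrability follows from Chung's maximal inequality $\mathbb{E}\sup_n[-n^{-1}\ln P]<\infty$, or more simply from the bound $\mathbb{E}[(-n^{-1}\ln P)\mathbf 1_{\{-n^{-1}\ln P>K\}}]\le (\text{const})\,|\Omega_R|e^{-nK/2}$-type tail estimates, which hold because the number of histories is $|\Omega_R|^n$. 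Convergence of the means $n^{-1}H(R_1,\ldots,R_n)\to h_\infty$ is consistent with Proposition~\ref{prop:shannon-mean}, and this serves as a check.
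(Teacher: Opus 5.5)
Your proposal is correct and follows the same route as the paper's proof. In the i.i.d. case you apply the strong law and the Lindeberg--L\'evy CLT to the i.i.d. surprisals $S_k=-\ln p(R_k)$, with mean $H(p)$ and variance $V(p)$, and in the stationary ergodic case you apply the Shannon--McMillan--Breiman theorem to $-n^{-1}\ln P(h_n)=\tau_n/(n\sigma)$.

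The paper simply quotes the $L^1$ part, so your uniform-integrability argument is an addition, and two details in it need fixing. The counting bound should read $\Pr\bigl(-n^{-1}\ln P(h_n)>K\bigr)\le|\Omega_R|^n e^{-nK}$, which gives uniform integrability once $K>\ln|\Omega_R|$. Also, Chung's lemma bounds $\sup_n$ of the conditional surprisals $-\ln p(R_0\mid R_{-n},\ldots,R_{-1})$, not of $-n^{-1}\ln P(h_n)$ directly.
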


\begin{proof}
In the i.i.d.\ case the statements are the strong law of large numbers and the central
limit theorem applied to the i.i.d.\ random variables $S_k$, which have mean $H(p)$ and
variance $V(p)$. The stationary ergodic case is the Shannon--McMillan--Breiman theorem
\cite{CoverThomas2006} applied to $-n^{-1}\ln P(h_n)=\tau_n/(n\sigma)$.
\end{proof}

This asymptotic behavior provides a simple statistical mechanism by which a clock built from stochastic microscopic record increments can acquire an increasingly regular macroscopic reading, showing that microscopic stochasticity is compatible with macroscopic temporal regularity.

\section{Large deviations and the R\'enyi spectrum}
\label{sec:moments-deviations}

Proposition~\ref{prop:renyi} established that the R\'enyi entropy spectrum
determines the moment-generating function of the pathwise surprisal clock.
We now use the same structure to characterize rare fluctuations of the clock via the theory of large deviations.

For this, consider an i.i.d.\ record process with common single-step distribution $p$.
The surprisal increment at step $k$ is then
\begin{equation}
S_k=-\ln p(R_k),
\end{equation}
where the $R_k$ are i.i.d.\ record variables. Since the $S_k$ are themselves
i.i.d., let $S$ denote a generic single-step surprisal with the same
distribution. Its cumulant-generating function is
\begin{equation}
 \Lambda(\lambda)
 :=\ln\mathbb E[e^{\lambda S}]
 =\ln\sum_r p(r)^{1-\lambda}
 =\lambda H_{1-\lambda}(p),
 \label{eq:cgf}
\end{equation}
whenever the sum converges. Thus the full cumulant-generating function is
equivalently encoded by the R\'enyi entropy spectrum. Its derivatives at
$\lambda=0$ recover the ordinary cumulants discussed above, but we now focus instead on its dependence on $\lambda$ away from the origin.

For $N$ independent records, the accumulated dimensionless clock reading is
\begin{equation}
\frac{\tau_N}{\sigma}
=
\sum_{k=1}^{N}S_k,
\end{equation}
and therefore its cumulant-generating function is $N\Lambda(\lambda)$. The
average clock increment per record is
\begin{equation}
\bar S_N
:=
\frac{\tau_N}{N\sigma}
=
\frac{1}{N}\sum_{k=1}^{N}S_k.
\label{eq:empirical-clock-rate}
\end{equation}
By the law of large numbers, $\bar S_N$ approaches the typical value given by the Shannon entropy of the common single-step record distribution $H(p)$
as $N$ grows. Large-deviation theory addresses the complementary question:
how unlikely is a history for which $\bar S_N$ remains substantially different
from $H(p)$? For large $N$, such probabilities decay exponentially with $N$,
and the exponent is determined by the Legendre--Fenchel transform of
$\Lambda$. Since $\Lambda(\lambda)=\lambda H_{1-\lambda}(p)$, the R\'enyi
spectrum therefore determines not only the ordinary fluctuations of the
clock but also the asymptotic probabilities of atypically small or large
clock readings.

\begin{proposition}[Large deviations of the mean clock increment]
\label{prop:ldp}
Let the record steps be i.i.d.\ with common distribution $p$ on a finite
set of record outcomes. Then the mean dimensionless clock increment
\begin{equation}
\bar S_N
=
\frac{\tau_N}{N\sigma}
\end{equation}
satisfies Cram\'er's large-deviation principle. In particular, for large
$N$, the probability of observing a value $\bar S_N\simeq u$ away from its
typical value $H(p)$ decays exponentially as
\begin{equation}
\Pr(\bar S_N\simeq u)
\asymp
e^{-N I(u)},
\label{eq:ldp-scaling}
\end{equation}
where the convex rate function is
\begin{equation}
I(u)
=
\sup_{\lambda}
\left[
\lambda u-\Lambda(\lambda)
\right]
=
\sup_{\lambda}
\left[
\lambda u-\lambda H_{1-\lambda}(p)
\right].
\label{eq:rate-function}
\end{equation}
\end{proposition}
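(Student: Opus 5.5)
The plan is to reduce the statement directly to Cram\'er's theorem for i.i.d.\ real random variables.

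Under the i.i.d.\ assumption, the surprisals $S_k=-\ln p(R_k)$ are i.i.d.\ copies of a single random variable $S$. Because the outcome set is finite, $S$ takes finitely many values. Discarding outcomes with $p(r)=0$, which occur with probability zero, these values all lie in $[0,-\ln p_{\min}]$, where $p_{\min}$ is the smallest nonzero Born weight. A bounded $S$ has a moment-generating function that is finite for every real $\lambda$. That moment-generating function is exactly the sum computed in Eq.~\eqref{eq:cgf}, namely $\Lambda(\lambda)=\ln\sum_r p(r)^{1-\lambda}$, so it needs no convergence caveat.

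Cram\'er's theorem in $\mathbb R$ then applies to $\bar S_N=N^{-1}\sum_k S_k$ as stated. Its rate function is the Legendre--Fenchel transform $I(u)=\sup_\lambda[\lambda u-\Lambda(\lambda)]$, and the principle gives the upper bound on closed sets and the lower bound on open sets. The informal relation~\eqref{eq:ldp-scaling} should be read as this LDP; evaluated on small intervals around $u$, it gives $N^{-1}\ln\Pr(\bar S_N\in(u-\epsilon,u+\epsilon))\to -\inf_{|v-u|<\epsilon}I(v)$.

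To identify the rate function, I would substitute $\Lambda(\lambda)=\lambda H_{1-\lambda}(p)$ from Proposition~\ref{prop:renyi}, equivalently Eq.~\eqref{eq:cgf}, into this transform; this yields the second expression in Eq.~\eqref{eq:rate-function}. At $\lambda=0$, where $H_1$ is the Shannon limit, one needs the continuous extension $\lambda H_{1-\lambda}\to 0$. That extension is immediate because $\Lambda(0)=\ln\sum_r p(r)=0$.

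Convexity of $I$ holds because it is a supremum of affine functions of $u$. I would also record that $I\ge 0$, using $\lambda=0$, and that $I(H(p))=0$. The latter follows from $\Lambda$ being convex (Hölder), differentiable, and satisfying $\Lambda'(0)=H(p)$, which gives $\Lambda(\lambda)\ge\lambda H(p)$. Hence the exponent is strictly positive away from the typical value, at least inside the open interval $(\min S,\max S)$ when $V(p)>0$. Outside $[\min S,\max S]$, $I=+\infty$.

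I expect the main care point to be the degenerate and boundary cases rather than the central argument. If $p$ is uniform on its support, then $S$ is constant, $V(p)=0$, and $I$ is $0$ at $H(p)$ and $+\infty$ elsewhere. At the endpoints $u=\min S$ or $u=\max S$, the supremum is approached only as $\lambda\to\pm\infty$, and $I$ equals $-\ln$ of the total probability of the outcomes attaining that surprisal. I would state these cases explicitly so that the "$\asymp$" in~\eqref{eq:ldp-scaling} is understood in the standard LDP sense rather than as a pointwise asymptotic.
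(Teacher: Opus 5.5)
Your proposal is correct and follows the same route as the paper. Finite support makes $\Lambda$ finite on all of $\mathbb R$, Cram\'er's theorem then gives the LDP with the Legendre--Fenchel rate function, and Eq.~\eqref{eq:cgf} supplies the R\'enyi form. Your extra care (restricting sums to the support so $\Lambda$ stays finite for $\lambda>1$, reading $\asymp$ in the LDP sense, and computing $I$ at the endpoints and in the degenerate case) complements the paper's own addition, the local expansion $I(u)\approx (u-H(p))^2/(2V(p))$ obtained from $\Lambda''(0)=V(p)$.
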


\begin{proof}
Because $p$ has finite support, the single-step surprisal
$S=-\ln p(R)$ takes only finitely many values, so its
cumulant-generating function $\Lambda(\lambda)$ is finite for all real
$\lambda$. Cram\'er's theorem therefore applies and gives the rate function
in Eq.~\eqref{eq:rate-function}. Since
$\Lambda'(0)=H(p)$ and $\Lambda''(0)=V(p)$, the rate function has its
minimum at $u=H(p)$ and, when $V(p)>0$, admits the local expansion
\[
I(u)
=
\frac{(u-H(p))^2}{2V(p)}
+
O\!\left((u-H(p))^3\right).
\]
If $V(p)=0$, the single-step surprisal is constant almost surely, so the
mean clock increment is deterministic.
\end{proof}

The large-deviation result gives a quantitative meaning to the stability of
the clock over long record histories. The value $H(p)$ is the typical
dimensionless clock increment per record, while fluctuations of order
$N^{-1/2}$ around this value describe the usual statistical variability of a
long clock history. The rate function $I(u)$ goes further by quantifying rare
histories in which the accumulated clock advances systematically more slowly
or more rapidly than its typical value. In particular, the analysis shows that their probabilities decrease
exponentially as $e^{-NI(u)}$. Thus the same record statistics that define
the surprisal clock also determine both its macroscopic regularity and the
probability of anomalous clock histories, without introducing any additional
clock postulate. 

\section{Record surprisal and quantum-state geometry}
\label{sec:projective-time}

The surprisal clock is defined from the conditional Born probabilities of
realized records. For projective records, these probabilities also determine
the geometric separation between the quantum states immediately before and
after the record is formed. This gives an exact relation between the
surprisal associated with a single record and the Fubini--Study geometry of
pure quantum states. This analysis distinguishes projective motion from the advance of the surprisal clock. 

A normalized pure state $|\psi\rangle$ represents a physical state only up to
an overall phase and therefore corresponds to a ray $[\psi]$ in projective
Hilbert space. The Fubini--Study metric provides the natural
unitary-invariant geometry on this space, and its geodesic distance between
two rays is determined by their quantum overlap
\cite{ProvostVallee1980,AnandanAharonov1990}. In what follows, we use the convention
\begin{equation}
d_{\rm FS}([\psi],[\phi])
=
\arccos|\langle\psi|\phi\rangle|,
\label{eq:fs-convention}
\end{equation}
in contrast to alternative conventions differing by an overall factor of two. Let $|\psi\rangle$ be the normalized state immediately before a projective
record is formed, and let $\Pi_r$ be the projector associated with the
realized record value $r$. Its Born probability is
\begin{equation}
p_r
=
\langle\psi|\Pi_r|\psi\rangle.
\end{equation}
For $p_r>0$, the corresponding conditional post-record state is
\begin{equation}
|\psi_r\rangle
=
\frac{\Pi_r|\psi\rangle}{\sqrt{p_r}}.
\label{eq:projective-update}
\end{equation}
We denote by
\begin{equation}
\theta_r
:=
d_{\rm FS}([\psi],[\psi_r])
\end{equation}
the Fubini--Study angle between the pre- and post-record rays.

\begin{proposition}[Surprisal and projective separation]
\label{prop:surprisal-fs}
For the projective record update in Eq.~\eqref{eq:projective-update},
\begin{equation}
\theta_r
=
\arccos\sqrt{p_r},
\qquad
-\ln p_r
=
-2\ln(\cos\theta_r).
\label{eq:surprisal-fs-exact}
\end{equation}
Consequently,
\begin{equation}
-\ln p_r
=
\theta_r^2
+
\frac{\theta_r^4}{6}
+
O(\theta_r^6),
\qquad
\theta_r\to0.
\label{eq:surprisal-fs-local}
\end{equation}
\end{proposition}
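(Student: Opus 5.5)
The plan is to compute the overlap between the pre- and post-record rays directly from the update rule in Eq.~\eqref{eq:projective-update}, and then invert the Fubini--Study convention of Eq.~\eqref{eq:fs-convention}. Since $\Pi_r$ is an orthogonal projector, $\Pi_r=\Pi_r^\dagger=\Pi_r^2$. This gives
\begin{equation}
\langle\psi|\psi_r\rangle
=\frac{\langle\psi|\Pi_r|\psi\rangle}{\sqrt{p_r}}
=\frac{p_r}{\sqrt{p_r}}
=\sqrt{p_r}.
\end{equation}
The overlap is therefore real and nonnegative, so $|\langle\psi|\psi_r\rangle|=\sqrt{p_r}$. I will also note that $|\psi_r\rangle$ is normalized, because $\langle\psi|\Pi_r^\dagger\Pi_r|\psi\rangle=p_r$. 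This normalization is what justifies using Eq.~\eqref{eq:fs-convention} without rescaling. The result is independent of the phase representatives chosen, since only the modulus enters.

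Applying $\arccos$, which is a bijection from $[0,1]$ onto $[0,\pi/2]$, gives $\theta_r=\arccos\sqrt{p_r}$. Because $p_r>0$, we have $\theta_r\in[0,\pi/2)$ and $\cos\theta_r=\sqrt{p_r}>0$. So the logarithm is well defined and $-\ln p_r=-2\ln\cos\theta_r$. This is the exact identity in Eq.~\eqref{eq:surprisal-fs-exact}. The case $p_r=1$ corresponds to $\theta_r=0$ and zero surprisal, consistent with Theorem~\ref{thm:surprisal}.

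For the local expansion Eq.~\eqref{eq:surprisal-fs-local}, I will Taylor expand $\ln\cos\theta$ about $\theta=0$. Writing $\cos\theta=1-u$ with $u=\theta^2/2-\theta^4/24+O(\theta^6)$, we use
\begin{equation}
\ln(1-u)=-u-\frac{u^2}{2}+O(u^3),
\end{equation}
which gives $\ln\cos\theta=-\theta^2/2-\theta^4/12+O(\theta^6)$. Multiplying by $-2$ yields $\theta^2+\theta^4/6+O(\theta^6)$.

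As a cross-check, I can use $\frac{d}{d\theta}(-2\ln\cos\theta)=2\tan\theta=2\theta+\tfrac{2}{3}\theta^3+O(\theta^5)$ and integrate term by term from $0$. No step is a real obstacle. The only point needing care is confirming the fourth-order coefficient ($1/6$ rather than $1/12$), which the derivative check settles. The other point to state explicitly is the convention without the factor of two, since with $d_{\rm FS}=2\arccos|\langle\psi|\phi\rangle|$ the coefficients would change.
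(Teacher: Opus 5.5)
Your proposal is correct and follows the paper's own proof: you compute $\langle\psi|\psi_r\rangle=\sqrt{p_r}$ from the projective update, invert the Fubini--Study convention to get $p_r=\cos^2\theta_r$, and Taylor expand $-2\ln\cos\theta_r$ about $\theta_r=0$. Your explicit check that $|\psi_r\rangle$ is normalized, which is where idempotence and self-adjointness of $\Pi_r$ are actually used, and your $2\tan\theta$ cross-check of the $\theta_r^4/6$ coefficient add useful care to the paper's terser argument.
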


\begin{proof}
Since $\Pi_r^2=\Pi_r$,
\begin{equation}
\langle\psi|\psi_r\rangle
=
\frac{\langle\psi|\Pi_r|\psi\rangle}{\sqrt{p_r}}
=
\sqrt{p_r}.
\end{equation}
Equation~\eqref{eq:surprisal-fs-exact} then follows directly from the
definition of the Fubini--Study angle and from
$p_r=\cos^2\theta_r$. Expanding $-2\ln(\cos\theta_r)$ around
$\theta_r=0$ gives Eq.~\eqref{eq:surprisal-fs-local}.
\end{proof}

Thus, for projective records, the dimensionless surprisal increment is an
exact monotone function of the projective displacement produced by the
record update. In the near-certain regime $p_r\to1$,
\begin{equation}
-\ln p_r
=
d_{\rm FS}([\psi],[\psi_r])^2
+
O\!\left(d_{\rm FS}^4\right).
\label{eq:surprisal-fs-quadratic}
\end{equation}
The logarithmic and geometric quantities are therefore locally related, but
they describe different physical aspects of the process: the Fubini--Study
distance measures motion in projective Hilbert space, whereas the surprisal
clock advances only when a probabilistically nontrivial record is formed.

For comparison, along a differentiable unitary trajectory generated locally
by a Hamiltonian $H_T$, neighboring states satisfy
\begin{equation}
\left|
\langle\psi(T)|\psi(T+\dd T)\rangle
\right|^2
=
1-
\frac{(\Delta H_T)^2}{\hbar^2}\,\dd T^2
+
o(\dd T^2),
\end{equation}
while their Fubini--Study separation is \cite{AnandanAharonov1990}
\begin{equation}
\dd s_{\rm FS}
=
\frac{\Delta H_T}{\hbar}|\dd T|
+
o(|\dd T|).
\end{equation}
Hence the negative logarithm of the neighboring-state fidelity is second
order in $\dd T$, whereas projective distance is first order. This difference
becomes operational under repeated monitoring: in the Zeno limit, the
corresponding unitary segments can retain a finite accumulated Fubini--Study
length while the surprisal associated with the recorded survival outcomes
tends to zero, as shown in Sec.~\ref{sec:zeno-time}.

\section{Comparison and consistency of multiple clocks}
\label{sec:multiple}

When several record processes are used as clocks, their readings need not
coincide. This situation raises three distinct questions. First, how are
clock readings related when the underlying record filtration is changed,
either by recoding the same factual information or by combining additional,
possibly correlated, record streams? Second, for local quantum clocks, under
what conditions does additivity in a given measurement context extend to
additivity in every local context, and what does this imply about the
structure of the underlying quantum state? Third, when different clocks are
used to parametrize the same physical evolution, what consistency relations
can be established between their rates, conditional dynamics, and
quantum-state descriptions? We address these three questions in turn.

\subsection{Comparison of record filtrations}
\label{subsec:filtration-covariance}

The representation theorem applies to a specified record filtration. For a filtration
$\mathfrak F$, let $f$ denote a realized history of probability
$P_{\mathfrak F}(f)$ and define the dimensionless clock reading
\begin{equation}
s_{\mathfrak F}(f)
:=
\frac{\tau_{\mathfrak F}(f)}{\sigma_{\mathfrak F}}
=
-\ln P_{\mathfrak F}(f).
\label{eq:dimensionless-filtration-clock}
\end{equation}
The functional form of the clock is the same for all filtrations but its reading need not be, given that different filtrations may describe
either the same factual information or genuinely different record processes.

Suppose that two compatible filtrations $\mathfrak F$ and $\mathfrak G$
admit a common joint record filtration
$\mathfrak H=\mathfrak F\vee\mathfrak G$, meaning a record filtration whose
histories contain enough factual information to recover both the
$\mathfrak F$- and $\mathfrak G$-histories. For a realized joint history $(f,g)$, define
\begin{equation}
s_{\mathfrak H}(f,g)
:=
-\ln P_{\mathfrak H}(f,g),
\end{equation}
where $P_{\mathfrak H}(f,g)$ is the joint Born probability of obtaining the
$\mathfrak F$-history $f$ and the $\mathfrak G$-history $g$. For $(f,g)$ in the joint support, the probability chain rule gives
\begin{equation}
P_{\mathfrak H}(f,g)
=
P_{\mathfrak F}(f)P(g\mid f)
=
P_{\mathfrak G}(g)P(f\mid g),
\end{equation}
and therefore
\begin{equation}
s_{\mathfrak H}(f,g)
=
s_{\mathfrak F}(f)-\ln P(g\mid f)
=
s_{\mathfrak G}(g)-\ln P(f\mid g).
\label{eq:filtration-chain-comparison}
\end{equation}
Thus, whenever a joint record structure exists, the relation between the
dimensionless readings is fixed by the corresponding conditional Born
probabilities. Equivalently, define the pointwise mutual information (or information density)
between the two realized record histories by
\begin{equation}
\imath(f;g)
:=
\ln
\frac{P_{\mathfrak H}(f,g)}
{P_{\mathfrak F}(f)P_{\mathfrak G}(g)}.
\label{eq:filtration-pointwise-mutual-information}
\end{equation}
The joint clock reading then satisfies the general identity
\begin{equation}
s_{\mathfrak H}(f,g)
=
s_{\mathfrak F}(f)
+
s_{\mathfrak G}(g)
-
\imath(f;g).
\label{eq:general-filtration-additivity}
\end{equation}
Averaging over the joint history distribution gives
\begin{equation}
\mathbb E[s_{\mathfrak H}]
=
\mathbb E[s_{\mathfrak F}]
+
\mathbb E[s_{\mathfrak G}]
-
I(F:G),
\label{eq:mean-general-filtration-additivity}
\end{equation}
where
\begin{equation}
I(F:G)
=
\mathbb E[\imath(F;G)]
=
\sum_{f,g}
P_{\mathfrak H}(f,g)
\ln
\frac{P_{\mathfrak H}(f,g)}
{P_{\mathfrak F}(f)P_{\mathfrak G}(g)}
\geq 0
\label{eq:filtration-mutual-information}
\end{equation}
is the mutual information between the two record histories
\cite{CoverThomas2006}. Dimensional readings are recovered through
$\tau_{\mathfrak F}=\sigma_{\mathfrak F}s_{\mathfrak F}$ and
$\tau_{\mathfrak G}=\sigma_{\mathfrak G}s_{\mathfrak G}$, so distinct
physical calibrations remain possible.

One can now distinguish several important special cases. Suppose first that
$\mathfrak G$ is obtained from $\mathfrak F$ by an information-preserving
recoding $g=\phi(f)$. Then the probability distribution on $\mathfrak G$ is
the pushforward of that on $\mathfrak F$:
\begin{equation}
P_{\mathfrak G}(g)
=
\sum_{f:\,\phi(f)=g} P_{\mathfrak F}(f).
\end{equation}
For a bijective recoding, each $g=\phi(f)$ has a unique preimage, so that
\begin{equation}
P_{\mathfrak G}(\phi(f))
=
P_{\mathfrak F}(f).
\end{equation}
Consequently,
\begin{equation}
s_{\mathfrak G}(\phi(f))
=
s_{\mathfrak F}(f).
\label{eq:filtration-recoding-invariance}
\end{equation}
The dimensionless clock reading is therefore invariant under a bijective
redescription of the same record information. By contrast and more generally, suppose that $\mathfrak G$ contains the records of
$\mathfrak F$ together with an additional record stream $B$, so that
$g=(f,b)$. Then
\begin{equation}
s_{\mathfrak G}(f,b)
=
s_{\mathfrak F}(f)-\ln P(b\mid f),
\label{eq:strict-filtration-refinement}
\end{equation}
and therefore
\begin{equation}
\mathbb E[s_{\mathfrak G}]
-
\mathbb E[s_{\mathfrak F}]
=
H(B\mid F)
\geq0.
\label{eq:mean-filtration-refinement}
\end{equation}
The additional reading is precisely the conditional surprisal carried by
the factual distinction present in $\mathfrak G$ but absent from
$\mathfrak F$. In the specific case where the added stream is statistically independent of $\mathfrak F$, then
$\imath(f;b)=0$, and the general relation
Eq.~\eqref{eq:general-filtration-additivity} reduces to
\begin{equation}
s_{\mathfrak G}(f,b)
=
s_{\mathfrak F}(f)+s_B(b),
\qquad
s_B(b):=-\ln P_B(b),
\label{eq:independent-filtration-addition}
\end{equation}
and, for a common calibration $\sigma$,
\begin{equation}
\tau_{\mathfrak G}(f,b)
=
\tau_{\mathfrak F}(f)+\tau_B(b).
\label{eq:independent-filtration-time-addition}
\end{equation}
For statistically independent record streams, the joint clock is therefore additive: factorization of the joint Born weight translates, through the logarithmic clock law, into addition of the corresponding clock readings.

The preceding results are purely probabilistic: for a specified pair of
compatible record streams, clock additivity is equivalent to factorization of
their joint record distribution. Quantum mechanics allows a stronger question related to the analysis of the record distributions generated by a bipartite state under different local measurement contexts.

\subsection{Local clocks and context-universal additivity}

Consider a bipartite system whose two record streams arise
from local measurements on its two subsystems. In this setting the local
measurement contexts themselves can be varied, which leads to a stronger question: is clock additivity merely a property of the Born distribution in
one chosen pair of local contexts, or does it persist under every local change
of context?

To address this question, we first characterize additivity in a fixed pair of local contexts and then
determine when it holds universally. For this, let $A$
and $B$ be the corresponding local record variables, with joint distribution
$p_{ij}$ and marginals $p_i^A$ and $p_j^B$. Using a common calibration
$\sigma$, define
\begin{align}
d_{AB}(i,j)&=-\sigma\ln p_{ij},\\
d_A(i)&=-\sigma\ln p_i^A,\\
d_B(j)&=-\sigma\ln p_j^B,
\end{align}
for outcomes in the support of the corresponding distributions.

The general correlation identity
Eq.~\eqref{eq:general-filtration-additivity} then specializes to
\begin{equation}
d_{AB}(i,j)
=
d_A(i)+d_B(j)-\sigma\,\imath(i;j),
\label{eq:pointwise-clock-correlation}
\end{equation}
where
\begin{equation}
\imath(i;j)
=
\ln\frac{p_{ij}}{p_i^A p_j^B}
\label{eq:pointwise-mutual-information}
\end{equation}
is the pointwise mutual information. Hence, in a fixed pair of local
contexts, clock additivity holds for every outcome in the support if and only
if the corresponding Born distribution factorizes,
\begin{equation}
p_{ij}
=
p_i^A p_j^B.
\label{eq:fixed-context-factorization}
\end{equation}

Averaging Eq.~\eqref{eq:pointwise-clock-correlation} gives
\begin{equation}
D_{AB}
=
D_A+D_B-\sigma I(A:B),
\label{eq:mean-clock-correlation}
\end{equation}
and therefore
\begin{equation}
D_A+D_B-D_{AB}
=
\sigma I(A:B)
\geq0.
\label{eq:shared-clock-information}
\end{equation}
Thus the joint clock counts the shared information once, whereas adding the
two marginal readings counts it twice.

The fixed-context statement is purely probabilistic. In particular,
factorization of the outcome distribution for one chosen pair of local
measurement contexts does not imply that the underlying quantum state is a
product state. A stronger statement becomes possible only when factorization
is required to hold under all local changes of measurement context. For a finite-dimensional bipartite state $\rho_{AB}$, a standard
characterization of product states is that
\begin{equation}
\rho_{AB}
=
\rho_A\otimes\rho_B
\end{equation}
if and only if the joint Born probabilities factorize for every pair of local
measurement contexts. Equivalently, it is sufficient to require
factorization for all pairs of local projective contexts, since the
corresponding local measurement statistics are tomographically complete
\cite{Horodecki2009,Watrous2018}. Combining this standard characterization with
Eq.~\eqref{eq:pointwise-clock-correlation} gives an immediate consequence with respect to clock additivity.

\begin{corollary}[Context-universal clock additivity]
\label{cor:context-universal}
For a finite-dimensional bipartite quantum state, with a common clock
calibration $\sigma$, the following statements are equivalent:
\begin{enumerate}
\item[(i)] The state is a product state,
\begin{equation}
\rho_{AB}
=
\rho_A\otimes\rho_B.
\end{equation}

\item[(ii)] For every pair of local measurement contexts, the joint
surprisal clock is additive for every outcome in its support,
\begin{equation}
d_{AB}^{U,V}(i,j)
=
d_A^{U,V}(i)+d_B^{U,V}(j).
\label{eq:universal-clock-additivity}
\end{equation}
\end{enumerate}
\end{corollary}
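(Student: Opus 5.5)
The proof combines the pointwise identity Eq.~\eqref{eq:pointwise-clock-correlation} with the standard characterization of product states by local measurement statistics. First fix an arbitrary pair of local projective contexts $U=\{\Pi^A_i\}$ and $V=\{\Pi^B_j\}$. Write $p^{U,V}_{ij}=\operatorname{tr}[\rho_{AB}(\Pi^A_i\otimes\Pi^B_j)]$, with marginals $p^A_i=\operatorname{tr}[\rho_A\Pi^A_i]$ and $p^B_j=\operatorname{tr}[\rho_B\Pi^B_j]$. In this context, Eq.~\eqref{eq:pointwise-clock-correlation} shows that the additivity condition \eqref{eq:universal-clock-additivity} at every outcome of the support is equivalent to $\imath(i;j)=0$ there. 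Since $\sigma>0$ and $\ln$ is injective, this is in turn equivalent to $p_{ij}=p^A_ip^B_j$ on the support.

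To turn this into factorization for all $(i,j)$, I will use that the support of $p_{ij}$ is contained in the product of the marginal supports. It remains to rule out $p_{ij}=0$ with $p^A_ip^B_j>0$. The constraint $\sum_{(i,j)\in\mathrm{supp}}p_{ij}=1=\sum_{i,j}p^A_ip^B_j$ does this. If factorization holds on the support, then the factorized product mass outside the support must vanish. Hence additivity on the support in a given context is equivalent to full factorization \eqref{eq:fixed-context-factorization} in that context.

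For (i)$\Rightarrow$(ii), take a product state. Then $\operatorname{tr}[(\rho_A\otimes\rho_B)(\Pi^A_i\otimes\Pi^B_j)]=\operatorname{tr}[\rho_A\Pi^A_i]\operatorname{tr}[\rho_B\Pi^B_j]$, so the distribution factorizes in every context, and additivity follows by the first step. The same computation covers general local POVM contexts, if ``local measurement context'' is read that broadly.

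For (ii)$\Rightarrow$(i), additivity in every context gives factorization for all local projective contexts. The goal is then to show that $X:=\rho_{AB}-\rho_A\otimes\rho_B$ vanishes. Factorization gives $\operatorname{tr}[X(P\otimes Q)]=0$ for all projectors $P,Q$ appearing in local projective measurements; in particular, every projector $P$ can be completed to the measurement $\{P,\mathds 1-P\}$. Projectors span the Hermitian operators on each factor, via spectral decomposition, and products $P\otimes Q$ then span all operators on $\cH_A\otimes\cH_B$. Since $X$ is Hermitian and orthogonal to a spanning set in the Hilbert--Schmidt inner product, $X=0$. This is the local tomographic completeness invoked before the corollary.

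The main obstacle is the support bookkeeping in the second step: showing that additivity only on the support still forces factorization everywhere. It is handled by the normalization argument above. The remaining steps are routine applications of results already stated.
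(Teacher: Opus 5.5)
Your proposal is correct and follows the paper's proof: you reduce fixed-context additivity to factorization of the Born distribution via the pointwise mutual information identity, and then convert factorization in all local contexts into $\rho_{AB}=\rho_A\otimes\rho_B$ by local tomographic completeness. You go slightly further than the paper in two places: you make explicit the normalization argument showing that factorization on the support forces factorization off it, and you prove the spanning (tomographic-completeness) step directly rather than citing it as a standard characterization.
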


\begin{proof}
Equation~\eqref{eq:pointwise-clock-correlation} shows that, in any fixed pair
of local contexts, clock additivity is equivalent to factorization of the
corresponding Born distribution. By the standard product-state characterization above, requiring this factorization for every pair
of local contexts is
equivalent to $\rho_{AB}=\rho_A\otimes\rho_B$.
\end{proof}

Additivity in one pair of local contexts establishes only
probabilistic independence of the corresponding observed outcomes.
Context-universal additivity is much stronger: it holds if and only if the
underlying bipartite state itself contains no correlations, that is, if and
only if it is a product state. For a non-product state, there necessarily exist local
contexts in which the joint surprisal clock reading contains a nonzero correlation
correction.

\subsection{Comparing different clock descriptions}
\label{subsec:clock-comparison}

The preceding results concern the information carried by different record
streams. When two record clocks parametrize the same underlying physical evolution, the natural question is how to compare their respective rates.

Let two ensemble record clocks be indexed by a common parameter $\lambda$,
with accumulated-history entropies $H_A(\lambda)$ and $H_B(\lambda)$. By
Eq.~\eqref{eq:mean-clock}, their mean readings are
\begin{equation}
\bar\tau_A(\lambda)
=
\sigma_A H_A(\lambda),
\qquad
\bar\tau_B(\lambda)
=
\sigma_B H_B(\lambda).
\label{eq:two-ensemble-clocks}
\end{equation}
Whenever these functions are differentiable and
$d\bar\tau_A/d\lambda\neq0$, their relative rate is
\begin{equation}
\frac{d\bar\tau_B}{d\bar\tau_A}
=
\frac{\sigma_B}{\sigma_A}
\frac{dH_B/d\lambda}{dH_A/d\lambda}.
\label{eq:relative-rates}
\end{equation}
Although the individual rates with respect to $\lambda$ depend on its
parametrization, their ratio does not: under any smooth monotone
reparametrization $\mu=\mu(\lambda)$, the common Jacobian factor cancels.
Equation~\eqref{eq:relative-rates} therefore provides a
reparametrization-invariant comparison between the mean readings of two
record clocks.

The present construction determines such clock readings and their relative
rates, but does not by itself derive the conditional dynamics of other
degrees of freedom with respect to them. Whether a suitable continuous
record clock can serve as the parameter of relational Schr\"odinger evolution
requires additional dynamical assumptions and is left for further study.

\section{Measurement frequency and the rate of the surprisal clock}
\label{sec:zeno-time}

The surprisal clock advances only when a record resolves genuine uncertainty:
an outcome of conditional probability one contributes no clock increment.
The influence of repeated monitoring must therefore be assessed not only
through its effect on quantum dynamics, but also through the record
probabilities from which the clock is constructed. We analyze this dependence
in the Zeno regime of quantum measurement theory
\cite{MisraSudarshan1977}, in a Markovian monitoring regime, and in an
anti-Zeno regime.

\subsection{The clock-rate functional}

Let $\delta t$ be the laboratory interval between successive records. This
quantity enters only as an external control parameter and plays no role in the
definition of the internal-time increment itself. Let
\begin{equation}
\eta_n
=
H(R_n\mid R_1,\ldots,R_{n-1})
\end{equation}
be the conditional entropy of the next record. By
Proposition~\ref{prop:shannon-mean}, the mean clock advance per round is
$\sigma\eta_n$. We therefore define the mean clock rate relative to the
external laboratory parameter by
\begin{equation}
\nu(\delta t)
:=
\frac{\sigma\,\eta_n}{\delta t}.
\label{eq:clock-rate}
\end{equation}

For the Zeno analysis, we specialize to binary survival monitoring. Each
interrogation produces one of two records: survival, with probability
$1-q(\delta t)$, or non-survival, with probability $q(\delta t)$. The
corresponding realized clock increment is
\begin{equation}
\Delta\tau
=
\begin{cases}
-\sigma\ln[1-q(\delta t)],
& \text{survival},\\[1mm]
-\sigma\ln q(\delta t),
& \text{non-survival}.
\end{cases}
\end{equation}
Its ensemble mean per interrogation is therefore
\begin{equation}
\mathbb E[\Delta\tau]
=
\sigma H_{\rm bin}(q(\delta t)),
\end{equation}
where
\begin{equation}
H_{\rm bin}(q)
:=
-q\ln q-(1-q)\ln(1-q)
\label{eq:binary-entropy}
\end{equation}
is the binary Shannon entropy. Relative to the externally controlled
laboratory interval $\delta t$, the mean internal-time accumulation rate is
thus
\begin{equation}
\nu(\delta t)
=
\frac{\sigma H_{\rm bin}(q(\delta t))}{\delta t}.
\label{eq:clock-rate-binary}
\end{equation}

It is useful to introduce the effective non-survival rate
\begin{equation}
\gamma_{\rm eff}(\delta t)
:=
\frac{q(\delta t)}{\delta t}.
\end{equation}
In the Zeno regime, quadratic short-time survival gives
$q(\delta t)=O(\delta t^2)$ and hence
$\gamma_{\rm eff}\to0$. In a Markovian regime,
$q(\delta t)\sim\gamma\delta t$ and
$\gamma_{\rm eff}\to\gamma>0$. An anti-Zeno window corresponds instead
to a monitoring-induced enhancement of the effective transition rate.

\subsection{Zeno collapse of the surprisal clock}
\label{subsec:zeno}

Consider rank-one survival monitoring. Starting from $|\psi\rangle$, the
system evolves under a self-adjoint Hamiltonian $\widehat H$ for a time
$\delta t$ and is then measured in the binary context
$\{\Pi_\psi,\mathbb I-\Pi_\psi\}$, with
$\Pi_\psi=|\psi\rangle\langle\psi|$. The short-time survival law follows
directly from unitary dynamics. Writing
\begin{equation}
A(\delta t)
=
\langle\psi|
e^{-i\widehat H\delta t/\hbar}
|\psi\rangle,
\end{equation}
expansion around $\delta t=0$ gives
\begin{equation}
|A(\delta t)|^2
=
1-
\frac{(\Delta E)^2}{\hbar^2}\delta t^2
+
O(\delta t^4),
\end{equation}
where the energy uncertainty is
\begin{equation}
(\Delta E)^2
:=
\langle\widehat H^2\rangle
-
\langle\widehat H\rangle^2.
\end{equation}
Hence the non-survival probability satisfies
\begin{equation}
q(\delta t)
=
\frac{(\Delta E)^2}{\hbar^2}\,\delta t^2
+
O(\delta t^4)
=
\left(\frac{\delta t}{\tau_Z}\right)^2
+
O(\delta t^4),
\label{eq:short-time-survival}
\end{equation}
where
\begin{equation}
\tau_Z
=
\frac{\hbar}{\Delta E}
\end{equation}
is the standard Zeno time. For later convenience, define
\begin{equation}
\alpha_Z
:=
\frac{(\Delta E)^2}{\hbar^2}
=
\frac{1}{\tau_Z^2}.
\label{eq:zeno-alpha}
\end{equation}

On the survival branch, the elementary internal-time increment therefore
obeys
\begin{equation}
\delta\tau_{\rm surv}
=
-\sigma\ln[1-q(\delta t)]
=
\sigma\alpha_Z\,\delta t^2
+
o(\delta t^2)
=
\sigma
\left(
\frac{\Delta E\,\delta t}{\hbar}
\right)^2
+
o(\delta t^2).
\label{eq:zeno-elementary-clock}
\end{equation}
Thus the energy uncertainty fixes the microscopic scale on which unitary
evolution generates nontrivial record surprisal.

For a sequence of $N$ recorded interrogations, let
\begin{equation}
\tau_N
=
\sum_{n=1}^{N}\delta\tau_n
\end{equation}
denote the random accumulated internal clock. For a particular realized
record history $h$, we write $\tau(h)$ for the corresponding realization
of $\tau_N$. In particular,
\begin{equation}
h_{\rm surv}
=
(\mathrm{surv},\ldots,\mathrm{surv})
\end{equation}
denotes the history in which all $N$ interrogations return the survival
outcome.

\begin{proposition}[Zeno collapse of the surprisal clock]
\label{prop:zeno-freeze}
Fix a laboratory interval $T>0$ and let
\begin{equation}
N
=
\left\lfloor\frac{T}{\delta t}\right\rfloor.
\end{equation}
As $\delta t\to0$,
\begin{align}
P(h_{\rm surv})
&=
(1-q)^N
\longrightarrow
1,
\\
\tau(h_{\rm surv})
&=
-\sigma N\ln(1-q)
=
\sigma\alpha_Z T\,\delta t\,(1+o(1))
\longrightarrow
0,
\label{eq:zeno-pathwise}
\\
\mathbb E[\tau_N]
&=
2\sigma\alpha_Z T\,\delta t
\ln\frac{1}{\delta t}\,(1+o(1))
\longrightarrow
0.
\label{eq:zeno-mean}
\end{align}
Moreover,
\begin{equation}
\tau_N
\xrightarrow{\;P\;}
0
\qquad\text{while}\qquad
N\longrightarrow\infty.
\label{eq:zeno-clock-collapse}
\end{equation}
\end{proposition}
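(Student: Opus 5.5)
The plan is to treat the three displayed limits separately. The mean is the delicate one; convergence in probability then follows from it by Markov's inequality, since $\tau_N\ge0$ pathwise by Theorem~\ref{thm:surprisal}.

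\emph{The all-survival history.} By the chain rule Eq.~\eqref{eq:history-probability}, every survival outcome returns the state to $|\psi\rangle$. The conditional survival probability is therefore $1-q(\delta t)$ at every round, and $P(h_{\rm surv})=(1-q)^N$. With $N\le T/\delta t$ and $q=\alpha_Z\delta t^2+O(\delta t^4)$ from Eq.~\eqref{eq:short-time-survival}, Bernoulli's inequality gives $(1-q)^N\ge 1-Nq\ge 1-\alpha_Z T\,\delta t+O(\delta t^3)\to1$. Next, $-N\ln(1-q)=Nq(1+O(q))$ and $N\delta t=T+O(\delta t)$. Hence $\tau(h_{\rm surv})=\sigma\alpha_Z T\,\delta t\,(1+o(1))$, which is Eq.~\eqref{eq:zeno-pathwise}; it is essentially Eq.~\eqref{eq:zeno-elementary-clock} summed $N$ times.

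\emph{The mean.} I would use Proposition~\ref{prop:shannon-mean} in its chain-rule form Eq.~\eqref{eq:conditional-entropy-sum}, equivalently the compensator $A_N$ of Proposition~\ref{prop:doob-decomposition}. This gives $\mathbb E[\tau_N]=\sigma\sum_{n=1}^N\mathbb E\bigl[H_{\rm bin}(q_n(h_{n-1}))\bigr]$, where $q_n(h_{n-1})$ is the conditional non-survival probability after history $h_{n-1}$. I then split the histories according to whether they have so far remained on the survival branch.
\begin{itemize}
\item On the survival branch, $q_n=q(\delta t)$ exactly. Expanding $H_{\rm bin}(q)=q\ln(1/q)+q+O(q^2)$ with $q\simeq\alpha_Z\delta t^2$ gives $H_{\rm bin}(q)=2\alpha_Z\delta t^2\ln(1/\delta t)\,(1+o(1))$. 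Summing over $N\simeq T/\delta t$ rounds, weighted by probabilities $(1-q)^{n-1}\to1$ uniformly in $n\le N$, yields the leading term $2\sigma\alpha_Z T\,\delta t\ln(1/\delta t)$.
\item Off the survival branch, the probability of having left it by round $n$ is at most $Nq=O(\delta t)$. It then suffices to bound the per-round conditional entropy uniformly by $C\,\delta t^2\ln(1/\delta t)$. This makes the off-branch contribution $O(\delta t)\cdot N\cdot C\delta t^2\ln(1/\delta t)=O(\delta t^2\ln(1/\delta t))$, which is of lower order.
\end{itemize}

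\emph{Main obstacle.} The uniform per-round bound is where I expect the real difficulty, because after a non-survival record the state lies in the complement of $|\psi\rangle$ and need not return to it. For any normalized conditioned state $|\phi\rangle$, the next-round probability of the outcome different from the one just recorded is $O(\delta t^2)$. If the last outcome was non-survival, then $|\phi\rangle\perp|\psi\rangle$ and the transition probability is $|\langle\psi|e^{-i\widehat H\delta t/\hbar}|\phi\rangle|^2\le\|\widehat H\|^2\delta t^2/\hbar^2$. If it was survival, the earlier short-time estimate applies directly. Since $H_{\rm bin}$ is increasing on $[0,1/2]$, this gives $H_{\rm bin}(q_n)\le C\delta t^2\ln(1/\delta t)$ with $C$ depending only on $\|\widehat H\|$.

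This step needs a uniform energy bound, for instance a bounded (e.g.\ finite-dimensional) Hamiltonian, or a uniform bound on the energy variance over the conditioned states. I would state that hypothesis explicitly. A cruder bound, such as $\ln2$ per round off the branch, gives only $O(1)$ and fails.

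\emph{Convergence in probability.} Finally, Markov's inequality gives $\Pr(\tau_N>\epsilon)\le\mathbb E[\tau_N]/\epsilon\to0$, while $N=\lfloor T/\delta t\rfloor\to\infty$. This is Eq.~\eqref{eq:zeno-clock-collapse}.
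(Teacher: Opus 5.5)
Your argument is correct under the bounded-$\widehat H$ hypothesis you flag. It departs from the paper's proof in two places.

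\textbf{The mean.} The paper simply writes $\mathbb E[\tau_N]=N\sigma H_{\rm bin}(q)$, treating every interrogation as the same binary trial with non-survival probability $q$. That is exact only along the all-survival branch, or for a protocol that re-prepares $|\psi\rangle$. Your split into on-branch and off-branch histories therefore supplies a justification the paper omits.

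\textbf{Convergence in probability.} The paper does not go through the mean at all. Since $\tau(h_{\rm surv})\to0$ deterministically, the event $\{\tau_N>\varepsilon\}$ is eventually contained in the event of at least one non-survival record, whose probability is $1-(1-q)^N\le Nq\to0$. That route uses only the first two limits and needs no control over what happens after a non-survival record. Your Markov argument, by contrast, makes Eq.~\eqref{eq:zeno-clock-collapse} inherit every hypothesis of the mean estimate.

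\textbf{The extra hypothesis is unnecessary.} Let $U=e^{-i\widehat H\delta t/\hbar}$ and let $|\phi\rangle$ be any normalized state with $|\phi\rangle\perp|\psi\rangle$. Then
\[
|\langle\psi|U|\phi\rangle|
=\bigl|\langle(\mathbb I-\Pi_\psi)U^\dagger\psi|\phi\rangle\bigr|
\le\|(\mathbb I-\Pi_\psi)U^\dagger\psi\|
=\sqrt{1-|A(\delta t)|^2}=\sqrt{q(\delta t)} .
\]
So after a non-survival record, the return probability is bounded by the same $q$ as the leakage from $|\psi\rangle$.

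Consequently, along every history the minority outcome of each round has probability at most $q$. For $q\le 1/2$ this gives $H_{\rm bin}(q_n)\le H_{\rm bin}(q)$. Combining this with your on-branch lower bound, where each on-branch weight is at least $(1-q)^N\ge 1-Nq$, gives
\[
N\sigma H_{\rm bin}(q)\,(1-Nq)\;\le\;\mathbb E[\tau_N]\;\le\;N\sigma H_{\rm bin}(q).
\]
This yields Eq.~\eqref{eq:zeno-mean} for any self-adjoint $\widehat H$ obeying the short-time law Eq.~\eqref{eq:short-time-survival}. No $\|\widehat H\|$ or uniform energy bound over conditioned states is needed, and it also removes the only extra hypothesis your Markov step relied on.
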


\begin{proof}
From Eq.~\eqref{eq:short-time-survival},
\begin{equation}
q(\delta t)
=
\alpha_Z\,\delta t^2+O(\delta t^4),
\end{equation}
while
\begin{equation}
N
=
\frac{T}{\delta t}+O(1).
\end{equation}
Hence
\begin{equation}
Nq
=
\alpha_Z T\,\delta t\,(1+o(1)).
\end{equation}
Since $\ln(1-q)=-q+O(q^2)$, this immediately gives
Eq.~\eqref{eq:zeno-pathwise} and
\begin{equation}
P(h_{\rm surv})
=
\exp\!\left[N\ln(1-q)\right]
\longrightarrow1.
\end{equation}
For binary survival monitoring, the mean surprisal per interrogation is
$\sigma H_{\rm bin}(q)$. As $q\to0$,
\begin{equation}
H_{\rm bin}(q)
=
q\left[1+\ln(1/q)\right]+O(q^2).
\label{eq:binary-entropy-small-q}
\end{equation}
Using
$q=\alpha_Z\delta t^2(1+o(1))$ and
$N=T/\delta t+O(1)$ gives
\begin{equation}
\mathbb E[\tau_N]
=
N\sigma H_{\rm bin}(q)
=
2\sigma\alpha_Z T\,\delta t
\ln\frac{1}{\delta t}\,(1+o(1)),
\end{equation}
which proves Eq.~\eqref{eq:zeno-mean}. Finally, the probability of at least one non-survival record is
\begin{equation}
1-(1-q)^N
=
\alpha_Z T\,\delta t\,(1+o(1))
\longrightarrow0.
\end{equation}
On the complementary all-survival history,
$\tau(h_{\rm surv})\to0$. Therefore, for every $\varepsilon>0$,
\begin{equation}
P(\tau_N>\varepsilon)
\leq
P(\text{at least one non-survival record})
+
o(1)
\longrightarrow0,
\end{equation}
which establishes convergence in probability.
\end{proof}

Note that the collapse of the surprisal clock does not imply that the underlying
unitary motion disappears. During each free segment, the Fubini--Study
angle between the initial and evolved rays is
\begin{equation}
\theta(\delta t)
=
\arccos\sqrt{1-q(\delta t)}
=
\sqrt{\alpha_Z}\,\delta t
+
O(\delta t^3),
\end{equation}
and therefore
\begin{equation}
\sum_{n=1}^{N}\theta(\delta t)
\longrightarrow
\sqrt{\alpha_Z}\,T
=
\frac{\Delta E}{\hbar}\,T.
\label{eq:zeno-projective-length}
\end{equation}

An arbitrarily fine sequence of recorded interrogations can therefore
generate a diverging ordinal record count while accumulating vanishing
internal duration. The reason is that the repeated survival records become
asymptotically certain and hence carry asymptotically vanishing informational
novelty. In this precise sense, the Zeno limit separates three notions that
need not coincide: the ordinal accumulation of records, dynamical motion in
projective state space, and the metric duration generated by record
surprisal.

\subsection{Continuous monitoring and anti-Zeno acceleration}
\label{subsec:rate-antizeno}

The quadratic short-time law used above describes the microscopic unitary
regime underlying the Zeno effect. A different scaling arises in an effective
Markovian regime, as commonly obtained for an open quantum system after
coarse-graining over environmental degrees of freedom. If the reduced
dynamics has a finite transition rate $\gamma$, the non-survival probability
over a short monitoring interval within this regime has the form
\begin{equation}
q(\delta t)
=
\gamma\,\delta t+O(\delta t^2),
\qquad
\gamma>0.
\label{eq:markovian-transition}
\end{equation}
Equivalently, this is the local form of exponential survival,
$P_{\rm surv}(t)\sim e^{-\gamma t}$. In this case, using
Eq.~\eqref{eq:binary-entropy-small-q} in
Eq.~\eqref{eq:clock-rate-binary} gives
\begin{equation}
\nu(\delta t)
=
\sigma\gamma
\left[
\ln\frac{1}{\gamma\delta t}+1
\right]
+
O(\delta t).
\label{eq:rate-regime}
\end{equation}
The surprisal-clock rate therefore depends logarithmically on the monitoring
resolution. At a fixed physical resolution $\delta t_\ast$ it remains finite;
the divergence as $\delta t\to0$ reflects the accumulation of increasingly
certain null records when arbitrarily fine monitoring is itself included in
the clock filtration.

More precisely, let $X$ denote the monitored system variable. The conditional
record entropy can be decomposed by the chain rule as
\begin{equation}
H(R_n\mid R_{<n})
=
I(X:R_n\mid R_{<n})
+
H(R_n\mid X,R_{<n}),
\label{eq:monitoring-entropy-decomposition}
\end{equation}
where the first term measures information newly acquired about the system and
the second is conditional apparatus noise.

\begin{proposition}[Bound on system-attributable monitoring time]
\label{prop:noise-funded}
For a discrete monitored variable $X$, define the system-attributable part of
the clock by
\begin{equation}
\tau_{\rm sys}(N)
:=
\sigma\sum_{n=1}^{N}I(X:R_n\mid R_{<n}).
\end{equation}
Then
\begin{equation}
\tau_{\rm sys}(N)
=
\sigma\,I(X:R_1,\ldots,R_N)
\leq
\sigma H(X).
\label{eq:system-clock-bound}
\end{equation}
\end{proposition}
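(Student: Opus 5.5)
The plan is to prove the identity by the chain rule for mutual information, and the inequality by the elementary bound of mutual information by entropy. Throughout, $X$ is a discrete random variable jointly distributed with the record variables $R_1,\ldots,R_N$ under the Born measure. This requires a joint model of system and records, which is implicit in the decomposition of Eq.~\eqref{eq:monitoring-entropy-decomposition}. The conditional mutual informations $I(X:R_n\mid R_{<n})$ are the usual averaged quantities over the record past; they are not the pathwise ones.

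First, I will invoke the chain rule for mutual information \cite{CoverThomas2006}:
\begin{equation*}
I(X:R_1,\ldots,R_N)=\sum_{n=1}^{N}I(X:R_n\mid R_1,\ldots,R_{n-1}).
\end{equation*}
To keep the argument self-contained, I will derive it from the Shannon chain rule already used in Proposition~\ref{prop:shannon-mean}. Write $I(X:R_1,\ldots,R_N)=H(R_1,\ldots,R_N)-H(R_1,\ldots,R_N\mid X)$. Then expand both entropies by the chain rule, the second one conditionally on $X$. The resulting terms regroup as $\sum_n\bigl[H(R_n\mid R_{<n})-H(R_n\mid X,R_{<n})\bigr]$, which is exactly the summand in Eq.~\eqref{eq:monitoring-entropy-decomposition}. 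Multiplying by $\sigma$ gives the equality in Eq.~\eqref{eq:system-clock-bound}.

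Second, for the inequality I will write $I(X:R_1,\ldots,R_N)=H(X)-H(X\mid R_1,\ldots,R_N)$. Because $X$ is discrete, the conditional entropy $H(X\mid R_1,\ldots,R_N)$ is a nonnegative average of Shannon entropies of conditional distributions. This gives $I\le H(X)$, and the bound holds uniformly in $N$ and in the monitoring interval $\delta t$.

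The main obstacle is not algebraic but one of well-definedness. If $X$ has a countable alphabet, $H(X)$ may be infinite; in that case the bound is trivial. If $H(X)<\infty$, every term in the chain rule is finite, so no $\infty-\infty$ cancellations arise in the regrouping. Countable record alphabets need the same care, and I would handle them by noting that all mutual informations are suprema over finite partitions, so the finite-alphabet argument passes to the limit. I would also remark on the interpretation: because the system-attributable part is bounded by $\sigma H(X)$, the logarithmic divergence in Eq.~\eqref{eq:rate-regime} must come from the apparatus-noise term $H(R_n\mid X,R_{<n})$.
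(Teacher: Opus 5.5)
Your proposal is correct and follows the paper's proof: the identity is the chain rule for mutual information, and the inequality is $I(X:R_1,\ldots,R_N)=H(X)-H(X\mid R_1,\ldots,R_N)\le H(X)$; your self-contained derivation and finiteness remarks fill in details the paper leaves implicit. One refinement: your regrouping through $H(R_1,\ldots,R_N)-H(R_1,\ldots,R_N\mid X)$ can still meet $\infty-\infty$ for countable record alphabets even when $H(X)<\infty$, so it is cleaner to telescope via $I(X:R_n\mid R_{<n})=H(X\mid R_{<n})-H(X\mid R_{\le n})$, which involves only entropies bounded by $H(X)$ and needs no appeal to finite partitions.
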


\begin{proof}
The identity follows from the chain rule for mutual information, and the
inequality from
$I(X:R_1,\ldots,R_N)\leq H(X)$.
\end{proof}

Arbitrarily fine sampling can therefore generate unboundedly many records
whose total surprisal grows without limit, while the component attributable
to learning a fixed system variable remains bounded. In a non-Markovian
regime, the effective rate
$\gamma_{\rm eff}(\delta t)=q(\delta t)/\delta t$
can increase as $\delta t$ decreases, for example through the interaction of
measurement-induced spectral broadening with a structured reservoir
\cite{KofmanKurizki2000}. In such an anti-Zeno window,
\begin{equation}
\nu(\delta t)
\simeq
\sigma\gamma_{\rm eff}(\delta t)
\left[
\ln\frac{1}{q(\delta t)}+1
\right],
\end{equation}
so more frequent interrogation can accelerate the surprisal clock. At still
shorter intervals, however, the universal quadratic survival law is recovered
and the Zeno collapse resumes.

The Zeno, effective Markovian, and anti-Zeno regimes therefore exhibit distinct behaviors of the mean surprisal produced per unit laboratory time: it vanishes in the Zeno limit, displays a logarithmic dependence on monitoring resolution within the Markovian regime, and can be enhanced over an anti-Zeno situation.

\section{Discussion and outlook}
\label{sec:further-research}
The preceding sections established the formal properties of the
record-defined surprisal clock and examined several of its statistical and
physical consequences. We now discuss the ontological interpretation suggested by the growing
record structure, and the main open problems that remain for a predictive
theory of physical clocks.

\subsection{Ontological interpretation}

The formal results of the paper do not require an ontological interpretation
of actualization but they nevertheless suggest one naturally. By
Theorem~\ref{thm:growing-records}, persistent records form a nested sequence
of valued algebras,
\begin{equation}
\Sigma_n\subseteq\Sigma_{n+1},
\qquad
v_{n+1}|_{\Sigma_n}=v_n,
\end{equation}
so that determinate factual content can increase while previously recorded
facts remain fixed. This structure is naturally related to growing-block
views of time, in which factual reality is taken to accumulate rather than
being either restricted to the present or fixed as a complete past--present--future
block \cite{Broad1923,EllisDrossel2020}. The potentiality interpretation developed in
Ref.~\cite{MartelliniPotentiality} adds an important qualification to this
analogy. What has not yet actualized need not be regarded as nonexistent:
unresolved alternatives may remain physically real as potentialities without
yet belonging to the growing domain of factual actuality. The resulting
picture may therefore be described as a \emph{growing block of actuality
relative to an underlying atemporal block of potentialities}. 

This picture may appear to be inconsistent with
the usual Schr\"odinger description, in which an unresolved quantum state can evolve
continuously with respect to an externally supplied parameter,
\[
i\hbar\frac{\partial}{\partial t}|\psi(t)\rangle
=
H|\psi(t)\rangle,
\]
even when no new outcome is actualized and no persistent record is formed.
This does not constitute a problem with Schr\"odinger dynamics as a physical
description. Rather, it shows that an externally parametrized Schr\"odinger
evolution cannot by itself explain the origin of time, since the temporal
parameter has already been supplied before either the evolution of
potentialities or their actualization is described. The potentiality--actuality
distinction and the temporal past--future distinction must therefore be kept
logically separate.

Potentiality should therefore not be understood as a static future awaiting
conversion into an actualized past against the background of an independently
given time parameter. The interpretation proposed here reverses this explanatory order. The
underlying domain of unresolved potentialities is not assigned a prior
temporal location or assumed to evolve fundamentally with respect to an
external clock. Actualization gives rise to persistent factual distinctions;
the inclusion structure of those distinctions provides an ordinal ordering;
and the statistics of their successive formation determine the quantitative
clock constructed in this paper. Temporal structure is therefore reconstructed
from the organization and accumulation of actualized records rather than
assumed as a background in which potentialities first evolve and subsequently
actualize. Within the standard Hilbert-space formalism of the present paper, the relevant
operational content is represented by the sequence of conditional states
\begin{equation}
|\psi\rangle
\longrightarrow
|\psi_{h_1}\rangle
\longrightarrow
|\psi_{h_2}\rangle
\longrightarrow\cdots .
\label{eq:ontological-conditioned-state-sequence}
\end{equation}
On the potentiality reading, this sequence can be interpreted as the successive
conditioning of unresolved potentialities by the records that have become
factual. The claim is thus not that a
fundamental state must evolve with respect to an already existing external time
coordinate. Rather, each $|\psi_{h_n}\rangle$ is the predictive quantum state
relative to a progressively richer factual history, and the order of that
history has itself been supplied by record inclusion. 

The atemporality attributed to the underlying domain of potentialities is
compatible with a history-dependent predictive state. What changes
operationally is the conditioning of the quantum description on newly
actualized facts; what grows ontologically, in this interpretation, is the
domain of factual actuality. Their common ordering is the record order derived
above rather than an independently postulated temporal parameter. In this sense, the update is Bayesian-like, but it is not classical Bayesian
conditioning on a pre-existing assignment of values since the conditioned quantum
state retains the coherent and contextual structure of the unresolved
alternatives.

At the interpretive level, this viewpoint is closer in spirit to timeless
formulations of generally covariant quantum theory, in which the fundamental
description need not contain a preferred external time parameter. This analogy
is limited, however, since no Wheeler--DeWitt equation or particular
quantum-gravitational dynamics is assumed here. The formal results established
in this paper require only the persistent record structure and conditional
Born probabilities described above, and therefore do not depend on this
ontological or any specific quantum-gravitational interpretation.

\subsection{Open questions}

A key open problem is the microscopic construction of physical clocks.  The
representation theorem determines the clock reading once a constitutive
record process is specified, but a physical model must still explain which
degrees of freedom generate persistent records, what fixes their effective
resolution, and what dynamics controls their production rate.  A complete
material-clock model should derive these ingredients from a concrete quantum
system and determine the conditions under which the resulting record process
is sufficiently stable and regular to support a robust macroscopic clock.
We leave this dynamical construction for further research.

A second question is whether familiar macroscopic time laws can emerge from
such record dynamics rather than being imposed as external rate laws.  In
particular, one would like to determine when the accumulation of microscopic
record surprisal approaches a regular macroscopic clock and how its residual
fluctuations scale under coarse graining.  The entropy, varentropy,
R\'enyi-spectrum, and martingale structure obtained above provide natural
starting points for such a stochastic macroscopic limit. A central test is then whether relativistic clock-rate relations can emerge
from the same microscopic mechanism.  In special relativity, the relevant
question is whether the proper-time rate of a moving material clock can be
derived from changes in its record-generation statistics, so that the
standard velocity-dependent time-dilation law emerges rather than being
inserted externally.  In a gravitational setting, one should similarly ask
whether gravitational clock redshift and, ultimately, the local proper-time
structure of general relativity can arise from the coupling between record
dynamics and an emergent relational geometry.  This requires a joint theory
of clock dynamics, motion, and spatial structure, and is left for further
research.

A related problem concerns transformations between distinct clocks.
Section~\ref{subsec:filtration-covariance} gives exact comparison maps for
compatible record filtrations with a common joint refinement, while
Section~\ref{subsec:clock-comparison} shows that ensemble clock rates can be
compared through a reparametrization-invariant relative rate.  A fuller theory should characterize
admissible clock descriptions and filtrations, comparison maps on overlaps,
consistency on multiple overlaps, and possible obstructions to a single
global temporal description.  This becomes especially relevant for noisy,
entangled, nonideal, or operationally incompatible clocks, for which locally
meaningful clock readings need not assemble into a unique global time. More broadly, the distinction established here between ordinal succession
and metric duration suggests asking whether temporal and spatial relations can
be reconstructed from different aspects of the same record structure.
Temporal duration is associated with the informational novelty of a newly
actualized record relative to the accumulated record history, whereas
relational separation should depend on the information shared, or not shared,
between distinct records conditioned on a common prior record history.  Whether these constructions
can jointly generate an effective Lorentzian spacetime geometry is an open
problem.

A complementary operational direction concerns time-of-arrival and related
temporal observables.  If time is assigned through actualized records rather
than introduced only as an external parameter, one should determine whether
arrival-time statistics can be reconstructed directly from detector-record
processes, whether the framework selects a preferred class of admissible
time-of-arrival distributions, and how these predictions relate to standard
quantum constructions.

\section{Conclusion}
\label{sec:conclusion}

This paper has developed a record-based notion of internal time in which
persistent records first provide an ordinal chronology through inclusion of
their accumulated record algebras. On a specified record filtration,
Born-weight sufficiency, record-refinement invariance and regularity uniquely
fix, up to an overall positive calibration scale, the pathwise clock reading as
\begin{equation}
\tau_{\rm A}(h)
=
-\sigma\ln P(h).
\end{equation}
Shannon entropy is its ensemble mean, and the varentropy, R\'enyi spectrum,
Doob decomposition, and large-deviation structure characterize the whole distribution beyond the mean.

The construction is universal in functional form but relative to a physically
specified record filtration. It does not determine which possible outcome
becomes factual, nor does it select which physical record process constitutes
a material clock or determine the rate at which its records are produced.
Those are dynamical questions. Once a constitutive record process is specified,
however, the theorem fixes how its realized Born probabilities are converted
into an additive internal-time reading.

Several results test the consistency of this construction beyond the
representation theorem itself. For finite-dimensional bipartite systems,
universal additivity of local clock readings across admissible local
measurement contexts characterizes product structure within the stated class
of states. For projective records, surprisal is an exact monotone function of
the Fubini--Study separation generated by the record update,
\begin{equation}
-\ln p=-2\ln\cos\theta,
\end{equation}
while the Zeno limit demonstrates that projective motion, ordinal record
accumulation, and surprisal duration are distinct: the number of recorded
interrogations can diverge and the accumulated Fubini--Study length of the
intervening unitary segments remain finite while the surprisal clock converges
to zero.

The main result is therefore not only a representation theorem and also a separation
of roles: persistent records supply temporal order, their conditional Born
weights uniquely determine the refinement-invariant additive clock reading
within the stated class, and microscopic dynamics must supply the physical
process that generates those records and sets their rate. Specific quantitative
predictions for material clocks remain to be derived from concrete physical
realizations.

\ifarxiv
  \section*{Statements and Declarations}

  \subsection*{Funding}
  The author received no specific funding for this work.

  \subsection*{Competing interests}
  The author declares no competing interests.

  \subsection*{Use of artificial intelligence tools}
  Large language model assistants (OpenAI ChatGPT and Anthropic Claude) were
  used during the preparation of this work, under the author's direction, for
  drafting and revising portions of the text, checking derivations and internal
  consistency, suggesting and verifying references, and preparing the LaTeX
  source. All definitions, results, proofs, and interpretive claims were
  critically assessed and approved by the author, who takes full responsibility
  for the content of the manuscript.

  \subsection*{Data availability}
  Data sharing is not applicable to this article as no datasets were generated or analyzed during the current study.

  \bibliographystyle{apsrev4-2}
\else
  \backmatter

  \section*{Statements and Declarations}

  \bmhead{Funding}
  The author received no specific funding for this work.

  \bmhead{Competing interests}
  The author declares no competing interests.

  \bmhead{Use of artificial intelligence tools}
  Large language model assistants (OpenAI ChatGPT and Anthropic Claude) were
  used during the preparation of this work, under the author's direction, for
  drafting and revising portions of the text, checking derivations and internal
  consistency, suggesting and verifying references, and preparing the LaTeX
  source. All definitions, results, proofs, and interpretive claims were
  critically assessed and approved by the author, who takes full responsibility
  for the content of the manuscript.

  \bmhead{Data availability}
  Data sharing is not applicable to this article as no datasets were generated or analyzed during the current study.
\fi

\bibliography{2.1referencesTime}

\end{document}